\newcommand{\argmin}{\operatornamewithlimits{argmin}}
\newcommand{\argmax}{\operatornamewithlimits{argmax}}
\def\cB{{\mathcal B}}
\def\cF{{\mathcal F}}
\def\cP{{\mathcal P}}
\def\bR{{\mathbb R}}
\def\bE{{\mathbb E}}
\def\bN{{\mathbb N}}
\def\NPDF{{\mathcal N}}
\def\cI{{\mathcal I}}
\def\f0{{\mathbf 0}}
\def\sk{{\mathsf k}}
\def\sp{{\mathsf p}}
\def\sD{{\mathsf D}}
\def\sC{{\mathsf C}}
\def\md{{\mathrm{d}}}
\renewcommand\qedsymbol{$\blacksquare$}
\newtheorem{thm}{Theorem}
\newtheorem{corollary}{Corollary}
\newtheorem{prop}{Proposition}
\newtheorem{assmp}{Assumption}
\theoremstyle{definition}
\newtheorem{rem}{Remark}
\begin{document}

\title{Parallel sequential Monte Carlo for stochastic gradient-free nonconvex optimization\thanks{An important part of this work was carried out when \"O.~D.~A. was visiting Department of Mathematics, Imperial College London. This work was partially supported by \textit{Agencia Estatal de Investigaci\'on} of Spain (RTI2018-099655-B-I00 CLARA), and the regional government of Madrid (program CASICAM-CM S2013/ICE-2845). The work of the second author has been partially supported by a UC3M-Santander Chair of Excellence grant held at the Universidad Carlos III de Madrid.}
}


\author{\"Omer Deniz Akyildiz         \and Dan Crisan \and
        Joaqu\'in M\'iguez 
}


\institute{\"O.~D.~Akyildiz \at
              {University of Warwick}\\
              The Alan Turing Institute\\
              \email{omer.akyildiz@warwick.ac.uk}           
		 \and
		 D.~Crisan \at
              {Imperial College London}\\
              \email{d.crisan@imperial.ac.uk}           
           \and
           J.~M\'iguez \at
              Universidad Carlos III de Madrid \& Instituto de Investigaci\'on Sanitaria Gregorio Mara\~n\'on\\
              \email{joaquin.miguez@uc3m.es} 
}

\date{Received: date / Accepted: date}

\maketitle

\begin{abstract}
We introduce and analyze a parallel sequential Monte Carlo methodology for the numerical solution of optimization problems that involve the minimization of a cost function that consists of the sum of many individual components. The proposed scheme is a stochastic zeroth order optimization algorithm which demands only the capability to evaluate small subsets of components of the cost function. It can be depicted as a bank of samplers that generate particle approximations of several sequences of probability measures. These measures are constructed in such a way that they have associated probability density functions whose global maxima coincide with the global minima of the original cost function. The algorithm selects the best performing sampler and uses it to approximate a global minimum of the cost function. We prove analytically that the resulting estimator converges to a global minimum of the cost function almost surely and provide explicit convergence rates in terms of the number of generated Monte Carlo samples and the dimension of the search space. We show, by way of numerical examples, that the algorithm can tackle cost functions with multiple minima or with broad ``flat'' regions which are hard to minimize using gradient-based techniques.
\keywords{Sequential Monte Carlo \and stochastic optimization \and nonconvex optimization \and gradient-free optimization \and sampling.}
\end{abstract}

\tolerance = 1600
\pretolerance = 1600


\section{Introduction}

In signal processing and machine learning, optimization problems of the form
\begin{align}\label{eqCostFunction}
\min_{\theta\in\Theta} f(\theta) = \sum_{i=1}^n f_i(\theta),
\end{align}
where $\Theta \subset \bR^d$ is the $d$-dimensional compact search space, have attracted significant attention in recent years for problems where $n$ is very large. Such problems often arise in big data settings, e.g., when one needs to estimate parameters given a {large number of observations \citep{bottou2018optimization}}. 

Because of their efficiency, the optimization community has focused mainly on stochastic gradient based methods \citep{RobbinsMonro,adaGrad,kingma2014adam} (see \citet{bottou2018optimization} for a recent review of the field) where an estimate of the gradient is obtained using a randomly selected subsample of the gradients of the component functions (the $f_i$'s in Eq. \eqref{eqCostFunction}) at each iteration. The resulting estimate is then used to perform a stochastic descent step. The majority of these stochastic gradient methods construct the subsamples using sampling with replacement to obtain unbiased estimates of the gradient. The latter can then be seen as a noisy gradient estimate with additive, zero-mean noise. In practice, however, there are schemes that subsample the data set without replacement (hence producing biased gradient estimators) and it has been argued that such methods can attain better numerical performance \citep{gurbuzbalaban2015random, shamir2016without}.

The gradient information may not be always available, however, due to different reasons. For example, in an engineering application, the system to be optimized might be a black-box, e.g., a piece of closed software code with free parameters, which can be \textit{evaluated} but cannot be differentiated \citep{nesterov2011random}. In these cases, one needs to use a \textit{gradient-free} optimization scheme, meaning that the scheme must rely only on function evaluations, rather than any sort of actual gradient information. \textit{Classical} gradient-free optimization methods have attracted significant interest over the past decades \citep{appel2004accelerated,spall2005introduction,marino2007monte,conn2009introduction}. These methods proceed either by a random search (which is based on evaluating the cost function at random points and update the parameter whenever a descent in the function evaluation is achieved \citep{spall2005introduction}), or by constructing a numerical (finite-difference type) approximation of the gradient that can be used to take a descent step \citep{nesterov2011random}.

Such methods are not applicable, however, if one can only obtain noisy function evaluations or one can only evaluate certain subsets of component functions in a problem like \eqref{eqCostFunction}. In this case, since the function evaluations are not exact, direct random search methods cannot be used reliably and it is only recently that some authors have described how to compute finite-difference approximations of the gradient \citep{wibisono2012finite, ghadimi2013stochastic, chen2015randomized, bach2016highly}. Also in recent years, evolutionary methods, based on the mutation, recombination and selection of samples, have been suggested for the approximation of gradients. The resulting optimization algorithms, termed evolutionary strategies (ES) have been applied within reinforcement learning schemes \citep{salimans2017evolution,wierstra2014natural,hansen2001completely,morse2016simple}.

However, {when the cost function has multiple minima or has some regions where the gradient vanishes, gradient-based methods may suffer from poor numerical performance.} In particular, the optimizer can get stuck in a local minimum easily, due to its reliance on gradient approximations. Moreover, when the gradient contains little information about any minimum (e.g., in flat regions), gradient-free stochastic optimizers (as well as perfect gradient schemes) can suffer from slow convergence.

{Model-based random-search methods \citep{hu2012survey}, which use probabilistic models of various types in order to speed up the search procedure, have been investigated in order to address problems where gradients cannot be approximated or simply turn out ineffective. The latter include classical algorithms such as simulated annealing (SA) \citep{kirkpatrick1983optimization}, Monte Carlo expectation maximization (EM) \citep{robert2004monte} and other Markov chain Monte Carlo (MCMC) based methods \citep{pereyra2015survey}. The class of model-based random search schemes also encompasses sequential Monte Carlo (SMC) techniques, e.g., SMC implementations of SA \citep{zhou2013sequential} and several optimization algorithms that mimic standard particle filters \citep{zhou2013particle,liu2016particle}. Let us note that most of the latter MCMC- and SMC-based procedures can be cast within the class of SMC samplers described in \citet{del2006sequential}, albeit with a target distribution which is sometimes implicitly defined in order to satisfy certain properties related to the objective function \citep{zhou2013particle}. Nevertheless, these optimization techniques are generally designed to be used in problems where the objective function can be evaluated exactly and their extension to stochastic optimization is not straightforward, neither from the point of view of practical performance nor in terms of theoretical convergence guarantees.}

{Some authors have also explored the duality between optimization and probability theory, in a way that potentially enables the use of general computational inference algorithms for solving optimization problems. While in model-based optimization the emphasis is put on the algorithms (e.g., how to use MCMC methods in \citet{pereyra2015survey} or particle filters in \citet{liu2016particle}, for optimization), in this line of research the emphasis is in converting the optimization problem into an equivalent inference problem, which can then be tackled with {\em any} suitable inference algorithm. A rigorous mathematical treatment of the topic can be found in \citet{del1999maslov}, while \citet{ikonen2005application} and \citet{miguez2013convergence} address the problem from a methodological viewpoint. Again, these contributions deal with problems where the objective function can be computed deterministically and exactly, though.}

{The stochastic setting, where it is only possible to compute noisy evaluations of $f(\theta)$, is harder and the bibliography is limited in comparison with the deterministic setup. The recent survey in \citet{homem2014monte} covers various gradient-based Monte Carlo procedures, however it addresses a different class of stochastic optimizaton problems where the cost function itself is defined as an expectation, rather than a finite-sum as in \eqref{eqCostFunction}.  Existing model-based random methods for stochastic optimization include MCMC-based samplers which target a probability density function (pdf) matched to the  objective function in \eqref{eqCostFunction} (meaning that the maxima of the pdf coincide with the minima of $f(\theta)$) \citep{welling2011bayesian,chen2016bridging}. Such schemes, however, also rely on the computation of noisy gradients. Other MCMC-based methods (see, e.g., \citet{alquier2016noisy} which employs noisy Metropolis steps) do not require gradients, yet these techniques have been primarily designed and investigated as sampling algorithms, rather than optimization methods. Similarly, an adaptive importance sampler for a target pdf matched to $f(\theta)$ is reported in \citet{akyildiz2017adaptive}. This method uses subsampling to compute noisy weights, but the technique lacks any theoretical guarantees and does not address the problem optimization directly either. A particle filtering algorithm for stochastic global optimization has been proposed by \citet{stinis2012stochastic}. The method is intuitive, simple to implement and has been shown to work efficiently in some simple examples, however the contribution of  \citet{stinis2012stochastic} is strictly methodological: there is no analysis of performance and no theoretical guarantees.}

In this paper, we propose a \textit{parallel sequential Monte Carlo optimizer} (PSMCO) to minimize cost functions with the finite-sum structure of problem \eqref{eqCostFunction}. The PSMCO is a zeroth-order stochastic optimization algorithm, in the sense that it only uses evaluations of small batches of individual components $f_i(\theta)$ in \eqref{eqCostFunction}. {In particular, it does not require the computation or approximation of gradients.} The proposed scheme proceeds by constructing parallel samplers, each of which aims at minimizing the same cost function $f(\theta)$. Each sampler performs subsampling without replacement to obtain its mini-batches of individual components and processes each component only once. Using these mini-batches, the PSMCO constructs potential functions, propagates samples via a jittering scheme \citep{crisan2018nested} and selects samples by applying a weighting-resampling procedure. The communication between parallel samplers is only necessary when a joint estimate of the minimum is required. In this case, the best performing sampler is selected and the minimum is estimated. 

{We analytically prove that the estimate provided by each sampler converges almost surely to a global minimum of the cost function and provide explicit convergence rates in terms of the number of Monte Carlo samples generated by the algorithm. This type of analysis goes beyond standard results for particle filters: it tackles the problem of stochastic optimization directly and it yields stronger theoretical guarantees compared to other stochastic optimization methods in the literature. In particular, we obtain error bounds for the solution of problem \eqref{eqCostFunction} that hold almost surely (a.s.) and vanish at a rate $\mathcal{O}\left( N^{-\frac{1}{2(d+1)}} \right)$,
where $N$ is the number of Monte Carlo samples and $d$ is the dimension of the search space $\Theta$. This is in contrast to the usual results for random search methods in the literature, which are purely asymptotic and do not provide any rates \citep{appel2004accelerated,miguez2010analysis,hu2012survey,zhou2013sequential,zhou2013particle}. Let us also remark the difference between the proposed scheme and the SMC-based schemes in \citet{miguez2013convergence} where the authors partitioned the parameter vector and modeled it as a dynamical system, an approach that cannot be used in the more general setup of \eqref{eqCostFunction} because each individual function $f_i$ depends on the complete vector $\theta$. The PSMCO algorithm, in turn, is explicitly designed to provide an estimate of the full parameter $\theta$ at each iteration.}

{The main contribution of this paper includes the theoretical analysis of the proposed PSMCO scheme and its numerical demonstration on three problems where classical stochastic optimization methods (especially gradient-based algorithms) struggle to perform. The paper is organized as follows. After a brief survey of the relevant notation (below), we lay out the relationship between Bayesian inference and optimization in Section \ref{sec:OptSamp}. Then, we develop a sequential Monte Carlo scheme in Section \ref{sec:TheAlg}. In Section \ref{sec:Analysis}, we analyze this scheme and investigate its theoretical properties. We present some numerical results in Section \ref{sec:Exp} and make some concluding remarks in Section \ref{sec:Conc}.}

\subsection*{Notation}

For $n\in\bN$, we denote $[n] = \{1,\ldots,n\}$. The space of bounded functions on the parameter space $\Theta \subset \bR^d$ is denoted as $B(\Theta)$. The set of continuous and bounded real functions on $\Theta$ is denoted $\mathsf{C}_b(\Theta)$. The family of Borel subsets of $\Theta$ is denoted with $\cB(\Theta)$. The set of probability measures on the measurable space $(\Theta,\cB(\Theta))$ is denoted $\cP(\Theta)$. Given $\varphi\in B(\Theta)$ and $\pi\in\cP(\Theta)$, the integral of $\varphi$ with respect to (w.r.t.) $\pi$ is written as $$(\varphi,\pi) = \int_\Theta \varphi(\theta) \pi(\mbox{d}\theta).$$ Given a Markov kernel $\kappa:\cB(\Theta) \times \Theta \mapsto [0,1]$, we denote $\kappa \pi(\md\theta) = \int \kappa(\mbox{d}\theta | \theta') \pi(\mbox{d}\theta')$. If $\varphi\in B(\Theta)$, then $\|\varphi\|_\infty = \sup_{\theta\in\Theta} |\varphi(\theta)| < \infty$.

Let $\alpha = (\alpha_1,\ldots,\alpha_{d}) \in \bN^* \times \cdots \times \bN^*$, where $\bN^* = \bN \cup \{0\}$, be a multi-index. We define the partial derivative operator $\sD^\alpha$ as
\begin{align*}
\sD^\alpha h = \frac{\partial^{\alpha_1} \cdots \partial^{\alpha_d} h}{\partial \theta_1^{\alpha_1} \cdots \partial \theta_d^{\alpha_d}}
\end{align*}
for a sufficiently differentiable function $h:\bR^d \to \bR$. We use $|\alpha| = \sum_{i=1}^d \alpha_i$ to denote the order of the derivative. Finally, the notation $\lfloor x \rfloor$ indicates the floor function for a real number $x$, which returns the biggest integer $k \leq x$.

\section{Stochastic optimization as inference}\label{sec:OptSamp}

In this section, we describe how to construct a sequence of probability distributions that can be linked to the solution of problem~\eqref{eqCostFunction}. Let $\pi_0\in\cP(\Theta)$ be the initial element of the sequence. We construct the rest of the sequence recursively as
\begin{align}\label{eqDistFlow}
\pi_t(\mbox{d}\theta) = \pi_{t-1}(\md \theta) \frac{G_t(\theta)}{\int_\Theta G_t(\theta) \pi_{t-1}(\md \theta)}, \quad \textnormal{for } t\geq 1,
\end{align}
where the maps $G_t:\Theta \mapsto \bR_+$ are termed \textit{potential functions} \citep{del2004feynman}. The key idea is to associate these potentials $(G_t)_{t\geq 1}$ with mini-batches of individual components of the cost function (subsets of the $f_i$'s) in order to construct a sequence of measures $\pi_0,\pi_1,\ldots,\pi_T$ such that (for a prescribed value of $T$) the global maxima of the density of $\pi_T$ match the global minima of $f(\theta)$. We remark that the measures $\pi_1,\ldots,\pi_T$ are all absolutely continuous w.r.t $\pi_0$ if the potential functions $G_t$, $t = 1,\ldots,T$, are bounded.


To construct the potentials, we use mini-batches consisting of $K$ individual functions $f_i$ for each iteration $t$. To be specific, we randomly select subsets of indices $\cI_t, t = 1,\ldots,T$, by drawing uniformly from $\{1,\ldots,n\}$ without replacement. Each subset has $|\cI_t| = K$ elements, in such a way that we obtain $T$ subsets satisfying $\bigcup_{i=1}^T \cI_t = [n]$ and $\cI_i \cap \cI_j = \emptyset$ when $i\neq j$. Finally, we define the potential functions $(G_t)_{t\geq 1}$ as
\begin{align}\label{eq:potential}
G_t(\theta) = \exp\left(-\sum_{i \in \cI_t} f_i(\theta)\right), \quad \quad t = 1,\ldots,T.
\end{align}
In the sequel, we provide a result that establishes a precise connection between the optimization problem in \eqref{eqCostFunction} and the sequence of probability measures defined in \eqref{eqDistFlow}, provided that Assumption~\ref{assmp:integrability} below is satisfied.

\begin{assmp}\label{assmp:integrability} The functions in the sequence $(G_t)_{t\geq 1}$ are positive and bounded, i.e.,
\begin{align*}
G_t(\theta) > 0 \quad \forall \theta\in\Theta \quad \textnormal{and} \quad G_t\in B(\Theta).
\end{align*}
\end{assmp}

Next, we show the relationship between the minima of $f(\theta)$ and the maxima of $\frac{\md \pi_T}{\md \pi_0}$.
\begin{prop}\label{prop:RN} Assume that the potentials are selected as in \eqref{eq:potential} for $1\leq t \leq T$, with $\cI_i \cap \cI_j = \emptyset$ and $\bigcup_i \cI_i = [n]$. Let $\pi_T$ be the $T$-th probability measure constructed by means of recursion \eqref{eqDistFlow}. If Assumption~\ref{assmp:integrability} holds and $\pi_0\in\cP(\Theta)$, then
\begin{align*}
\argmax_{\theta\in\Theta} \frac{\md\pi_T}{\md\pi_0}(\theta) = \argmin_{\theta\in\Theta} \sum_{i=1}^n f_i(\theta),
\end{align*}
where $\frac{\md\pi_T}{\md\pi_0}(\theta):\Theta \to \bR_+$ denotes the Radon-Nikodym derivative of $\pi_T$ w.r.t. the prior measure $\pi_0$.
\end{prop}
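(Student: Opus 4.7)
The plan is to unroll the recursion in \eqref{eqDistFlow} so that $\pi_T$ can be written explicitly as $\pi_0$ reweighted by a single product of potentials, and then exploit the partition property of $\{\cI_t\}_{t=1}^T$ to collapse that product into a single exponential involving the full cost function $f$.

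Concretely, I would proceed as follows. First, I would verify by induction that, for every $t \geq 1$,
\begin{align*}
\pi_t(\md\theta) = \pi_0(\md\theta)\,\frac{\prod_{s=1}^{t} G_s(\theta)}{Z_t},
\qquad
Z_t := \int_\Theta \prod_{s=1}^{t} G_s(\theta)\,\pi_0(\md\theta).
\end{align*}
The base case $t=1$ is the definition of $\pi_1$, and the inductive step follows immediately from \eqref{eqDistFlow} after substituting the inductive hypothesis for $\pi_{t-1}$ (the denominator telescopes to $Z_t/Z_{t-1}$, yielding $Z_t$ as the overall normalizer). Assumption~\ref{assmp:integrability} guarantees that each $G_s$ is bounded and strictly positive, so $0 < Z_t < \infty$ for all $t$, and in particular $\pi_t \ll \pi_0$ with a well-defined Radon--Nikodym derivative.

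Next, I would evaluate the derivative at $t=T$ using the explicit form of the potentials from \eqref{eq:potential}. Since $\cI_i \cap \cI_j = \emptyset$ for $i \neq j$ and $\bigcup_{t=1}^{T} \cI_t = [n]$, summing the exponents collapses the product to the full cost:
\begin{align*}
\prod_{t=1}^{T} G_t(\theta) = \exp\!\left(-\sum_{t=1}^{T}\sum_{i \in \cI_t} f_i(\theta)\right) = \exp\!\left(-\sum_{i=1}^{n} f_i(\theta)\right).
\end{align*}
Hence $\frac{\md\pi_T}{\md\pi_0}(\theta) = Z_T^{-1}\exp(-\sum_{i=1}^n f_i(\theta))$.

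Finally, since $Z_T$ is a positive constant (independent of $\theta$) and $x \mapsto e^{-x}$ is strictly decreasing on $\bR$, maximizing $\frac{\md\pi_T}{\md\pi_0}(\theta)$ over $\Theta$ is equivalent to minimizing $\sum_{i=1}^n f_i(\theta)$ over $\Theta$, which yields the claimed equality of argmax and argmin sets. There is no real obstacle here; the only point requiring care is the well-posedness of $Z_T$ and of the Radon--Nikodym derivative, which is precisely what Assumption~\ref{assmp:integrability} supplies. The partition structure of $\{\cI_t\}$ is what allows the telescoping of the exponents and thus the direct link between the terminal density and the total cost $f$.
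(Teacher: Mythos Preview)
Your proposal is correct and follows essentially the same route as the paper's proof: both unroll the recursion \eqref{eqDistFlow} by induction to obtain $\frac{\md\pi_T}{\md\pi_0}(\theta) \propto \prod_{t=1}^T G_t(\theta)$, invoke Assumption~\ref{assmp:integrability} to ensure the normalizers are finite and the Radon--Nikodym derivative is well defined, and then use the partition $\bigcup_t \cI_t = [n]$ to collapse the product into $\exp(-\sum_{i=1}^n f_i(\theta))$. Your write-up is slightly more explicit about the normalizing constants $Z_t$ and the monotonicity of $x\mapsto e^{-x}$, but there is no substantive difference in strategy.
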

\begin{proof}
See Appendix~\ref{proof:RN}.
\end{proof}

For conciseness, we abuse the notation and use $\pi(\theta)$, $\theta\in\Theta$, to indicate the pdf associated to a probability measure $\pi(\md\theta)$. The two objects are distinguished clearly by the context (e.g., for an integral $(\varphi,\pi)$, $\pi$ necessarily is a measure) but also by their arguments. The probability measure $\pi(\cdot)$ takes arguments $\md\theta$ or $A \in \cB(\Theta)$, while the pdf $\pi(\theta)$ is a function $\Theta\to [0,\infty)$.
\begin{rem}\label{rem:Prior} 
Notice that when $\pi_0$ is a uniform probability measure on $\Theta$, we simply have
\begin{align*}
 \pi_T(\theta) \propto \exp\left(-\sum_{i=1}^n f_i(\theta)\right), \quad \theta\in\Theta.
\end{align*}
where $\pi_T(\theta)$ denotes the pdf (w.r.t. Lebesgue measure) of the measure $\pi_T(\md\theta)$. \hfill $\blacksquare$
\end{rem}
{\begin{rem} 
Moreover, if we choose
\begin{align}\label{eq:PriorFromPotentials}
\pi_0(\theta) \propto \exp\left(-f_1(\theta)\right)
\end{align}
and select index subsets such that $\bigcup_{t=1}^T \cI_t = \{2,\ldots,n\}$ then we also obtain
\begin{align*}
 \pi_{T}(\theta) \propto \exp\left(-\sum_{i=1}^n f_i(\theta)\right), \quad \quad \textnormal{for } \theta\in\Theta.
\end{align*}
When a Monte Carlo is scheme used to realize recursion \eqref{eqDistFlow}, the use of a prior of the form \eqref{eq:PriorFromPotentials} requires the ability to sample from it.
\hfill $\blacksquare$	
\end{rem}}

In summary, if we can construct the sequence described by \eqref{eqDistFlow}, then we can replace the minimization problem of $f(\theta)$ in \eqref{eqCostFunction} by the maximization of a pdf. This relationship was exploited in a Gaussian setting in \citet{akyildiz2018proximal}, i.e., the special case of a Gaussian prior $\pi_0$ and log-quadratic potentials $(G_t)_{t\geq 1}$ (Gaussian likelihoods), which makes it possible to implement recursion \eqref{eqDistFlow} analytically. The solution of this special case can be shown to match a well-known stochastic optimization algorithm, called the incremental proximal method \citep{bertsekas2011incremental}, with a variable-metric. However, for general priors and potentials, it is not possible to analytically construct \eqref{eqDistFlow} and maximize $\pi_T(\theta)$. For this reason, we propose a simulation method to approximate the recursion \eqref{eqDistFlow} and solve $\argmax_{\theta\in\Theta} \frac{\md \pi_T}{\md \pi_0}(\theta)$.

\section{The algorithm}\label{sec:TheAlg}

In this section we first describe a sampler to simulate from the distributions defined by recursion \eqref{eqDistFlow}. We then describe an algorithm which runs these samplers in parallel. The parallelization here is not primarily motivated by the computational gain (although it can be substantial). We have empirically found that non-interacting parallel samplers are able to keep track of multiple minima better than a single ``big'' sampler. For this reason, we will not focus on demonstrating computational gains in the experimental section. Rather, we will discuss what parallelization brings in terms of providing better estimates.

We consider $M$ workers (corresponding to $M$ samplers). Specifically, each worker sees a different configuration of the dataset, i.e., the $m$-th worker constructs a distinct sequence of index sets $(\cI_t^{(m)})_{t\geq 1}$ which determine the mini-batches sampled from the full set of individual components. Having obtained different mini-batches which are randomly constructed, each worker then constructs different potentials $(G_t^{(m)})_{t\geq 1}$, where $G_t^{(m)}(\cdot) = \exp\left\{ -\sum_{i\in\cI_t} f_i(\cdot) \right\}$, as described in the previous section.

The $m$-th worker, therefore, aims at estimating a specific sequence of probability measures $\pi_t^{(m)}$, for $m \in \{1,\ldots,M\}$. We denote the particle approximation of the posterior $\pi_t^{(m)}$ at time $t$ as 
\begin{align*}
\pi_t^{(m),N}(\md\theta) = \frac{1}{N} \sum_{i=1}^N \delta_{\theta^{(i,m)}}(\md\theta),
\end{align*}
where $\delta_{\theta'}(\md \theta)$ is the unit delta measure located at $\theta' \in \Theta$. Overall, the algorithm retains $M$ probability distributions. Note that these distributions are different for each $t<T$, as they depend on different potentials, but $\pi_T^{(m)}=\pi_T$ for all workers because $\bigcup_{t=1}^T \cI_t^{(m)} = [n]$ for every $m$.

\begin{algorithm}[t] 
\begin{algorithmic}[1] 
\caption{Sampler on a local node $m$}\label{alg:localnode}
\State Sample $\theta_0^{(i,m)} \sim \pi_0$ for $i = 1, \ldots, N$.
\For{$t\geq 1$}
\State Jitter by generating samples
\begin{align*}
\hat{\theta}_t^{(i,m)} \sim \kappa(\mbox{d}\theta | \theta_{t-1}^{(i,m)}) \quad\quad \quad \textnormal{\,\,\,\,\,\,for} \quad i = 1,\ldots,N.
\end{align*}
\State Compute normalized weights,
\begin{align*}
w_t^{(i,m)} = \frac{G_t^{(m)}(\hat{\theta}_t^{(i,m)})}{\sum_{i=1}^N G_t^{(m)}(\hat{\theta}_t^{(i,m)})}  \quad \textnormal{for} \quad i = 1,\ldots,N.
\end{align*}
\State Resample by drawing $N$ i.i.d. samples,
\begin{align*}
\theta_t^{(i,m)} \sim \hat{\pi}_t^{(m),N}(\md\theta) := \sum_{i=1}^N w_t^{(i,m)} \delta_{\hat{\theta}_t^{(i,m)}}(\mbox{d}\theta),
\end{align*}
for $ i = 1,\ldots,N.$
\EndFor
\end{algorithmic}
\end{algorithm}

One iteration of the algorithm on a local worker $m$ can be described as follows. Assume the worker has computed the probability measure $\pi_{t-1}^{(m),N}$ using the particle system $\{\theta_{t-1}^{(m,i)}\}_{i=1}^N$. First, we use a jittering kernel $\kappa(\md \theta | \theta_{t-1})$ (a Markov kernel on $\Theta$) to modify the particles \citep{crisan2018nested} (see Section~\ref{sec:JitteringKernel} for the precise definition of $\kappa(\cdot|\cdot)$). The idea is to \textit{jitter} a subset of the particles in order to modify and propagate them into better regions of $\Theta$ with higher probability density and lower cost. The particles are jittered by sampling,
\begin{align*}
\hat{\theta}_t^{(i,m)} \sim \kappa(\cdot | \theta_{t-1}^{(i,m)}) \quad \textnormal{for } i = 1,\ldots,N.
\end{align*}
Note that the jittering kernel may be designed so that it only modifies a subset of particles (again, see Section~\ref{sec:JitteringKernel} for details). Next, we compute weights for the new set of particles $\{\hat{\theta}_t^{(i,m)}\}_{i=1}^N$ according to the $t$-th potential, namely
\begin{align*}
w_t^{(i,m)} = \frac{G_t^{(m)}(\hat{\theta}_t^{(i,m)})}{\sum_{i=1}^N G_t^{(m)}(\hat{\theta}_t^{(i,m)})}  \quad \textnormal{for} \quad i = 1,\ldots,N.
\end{align*}

\begin{rem}
The particle weights can be made proportional to the potentials alone, i.e., $w_t^{(i,m)} \propto G_t^{(m)}(\hat \theta_t^{(i,m)})$, as long as the jittering kernels satisfy Assumption \ref{assmp:kernel} in Section \ref{sec:JitteringKernel}. Under mild assumptions, Algorithm 1 converges with standard error rates $\mathcal{O}(N^{-\frac{1}{2}})$, as proved in Section \ref{sec:Analysis}.
\end{rem}

After obtaining weights, each worker performs a resampling step where for $i = 1,\ldots,N$, we set $\theta_t^{(i,m)} = \hat{\theta}_t^{(i,k)}$ for $k \in \{1,\ldots,N\}$ with probability $w_t^{(i,m)}$. The procedure just described corresponds to a simple multinomial resampling scheme, but other standard methods can be applied as well \citep{douc2005comparison}.
We denote the resulting probability measure constructed at the $t$-th iteration of the $m$-th worker as
\begin{align*}
\pi_t^{(m),N}(\mbox{d}\theta) = \frac{1}{N} \sum_{i=1}^N \delta_{{\theta}_t^{(i,m)}}(\mbox{d}\theta).
\end{align*}

The full procedure for the $m$-th worker is outlined in Algorithm~\ref{alg:localnode}. In Section~\ref{sec:JitteringKernel}, we elaborate on the selection of the jittering kernels and in Section~\ref{sec:estMin}, we detail the scheme for estimating a global minimum of $f(\theta)$ from the set of random measures $\{\pi_t^{(m),N}\}_{m=1}^M$.
%

\subsection{Jittering kernel}\label{sec:JitteringKernel}

The jittering kernel constitutes one of the key design choices of the proposed algorithm. Following \citet{crisan2018nested}, we put the following assumption on the kernel $\kappa$.
\begin{assmp}\label{assmp:kernel} The Markov kernel $\kappa$ satisfies
\begin{align*}
\sup_{\theta' \in \Theta} \int_\Theta |\varphi(\theta) - \varphi(\theta')| \kappa(\mbox{d}\theta|\theta') \leq \frac{c_\kappa \|\varphi\|_\infty}{\sqrt{N}}
\end{align*}
for any $\varphi \in B(\Theta)$ and some constant $c_\kappa < \infty$ independent of $N$.
\end{assmp}
In this paper, we use kernels of form
\begin{align}\label{eq:kernelDefn}
\kappa(\mbox{d}\theta | \theta') = (1 - \epsilon_N) \delta_{\theta'}(\mbox{d}\theta) + \epsilon_N \tau(\mbox{d}\theta | \theta'),
\end{align}
where $\epsilon_N \leq \frac{1}{\sqrt{N}}$, which satisfy Assumption~\ref{assmp:kernel} \citep{crisan2018nested}. The kernel $\tau$ can be rather simple, such as a multivariate Gaussian or multivariate-t distribution centered around $\theta' \in \Theta$. Other choices of $\tau$ are possible as well.
\begin{rem} {The design of the kernel as a centered Gaussian or a multivariate-t distribution around $\theta'$ may not guarantee the propagation of samples into better (lower cost) regions. In this case, the weighting-and-resampling procedure naturally tends to keep and replicate the particles that attain a lower cost. However, the jittering kernel can also be designed to accelerate the optimization process. In particular, our setup allows for the use of gradient estimators (such as finite-difference schemes \citep{nesterov2011random} or nudging steps \citep{akyildiz2020nudging}) in the jittering kernel to accelerate the propagation of samples into lower-cost regions.}
\end{rem}
\subsection{Estimating the global minima of $f(\theta)$}\label{sec:estMin}
In order to estimate the global minima of $f(\theta)$, we first assess the performance of the samplers run by each worker. A typical performance measure is the \textit{marginal likelihood estimate} resulting from $\pi_t^{(m),N}$. After choosing the worker which has attained the highest marginal likelihood (say the $m_0$-th worker), we estimate a minimum of $f(\theta)$ by selecting the particle $\theta_t^{(i,m)}$ that yields the highest density $\pi_t^{(m_0)}(\theta_t^{(i,m_0)})$.
\begin{algorithm}[thb]
\begin{algorithmic}[1]
\caption{PSMCO}\label{alg:global}
\State Sample $\theta_0^{(i,m)} \sim \pi_0$ for $i = 1, \ldots, N$.
\For{$t\geq 1$}
\For{$m = 1,\ldots,M$}
\State {Jitter $N$ local particles (step~3 of Algorithm~\ref{alg:localnode}).}
\State Update the marginal likelihood
\begin{align*}
Z_{1:t}^{(m),N} = Z_{1:t-1}^{(m),N} \times Z_t^{(m),N}
\end{align*}
where
\begin{align*}
Z_t^{(m),N} = \frac{1}{N} \sum_{i=1}^N G_t^{(m)}(\hat{\theta}_t^{(i,m)}).
\end{align*}
\State {Compute weights (step~4 of Algorithm~\ref{alg:localnode}).}
\State {Resample (step~5 of Algorithm~\ref{alg:localnode}).}
\EndFor
\If{an estimate of the solution of problem \eqref{eqCostFunction} is needed at time $t$}
\State Choose
$$m_t^\star = \argmax_{m \in \{1,\ldots,M\}} Z_{1:t}^{(m),N}$$
\State Estimate
\begin{align*}
\theta^{\star,N}_t = \argmax_{i\in\{1,\ldots,N\}} \mathsf{p}^{(m_t^\star),N}_t(\theta_t^{(i,m_t^\star)}).
\end{align*}
\EndIf
\EndFor
\end{algorithmic}
\end{algorithm}

To be precise, let us start by denoting the \textit{incremental} marginal likelihood associated to $\pi_t^{(m)}$ and its estimate $\pi_t^{(m),N}$ as $Z_{1:t}^{(m)}$ and $Z_{1:t}^{(m),N}$, respectively. They can be explicitly obtained by first computing
\begin{align*}
Z_{t}^{(m)} &= \int G_t^{(m)}(\theta) \hat{\pi}^{(m)}_t(\mbox{d}\theta)\\
& \approx \frac{1}{N} \sum_{i = 1}^N G_t^{(m)}(\hat{\theta}_t^{(i,m)}) := Z_t^{(m),N}
\end{align*}
and then updating the running products
\begin{align*}
Z_{1:t}^{(m)} = Z_t^{(m)} Z_{1:t-1}^{(m)} = \prod_{k=1}^t Z_k^{(m)}
\end{align*}
and
\begin{align*}
Z_{1:t}^{(m),N} = Z_t^{(m),N} Z_{1:t-1}^{(m),N} = \prod_{k=1}^t Z_k^{(m),N}.
\end{align*}
The quantity $Z_{1:t}^{(m)}$ is a local performance index that keeps track of the ``quality'' of the $m$-th particle system $\{\theta_t^{(i,m)}\}_{i=1}^N$ \citep{elvira2017adapting} and, hence, we use $\{Z_{1:t}^{(m),N}\}_{m=1}^M$ to determine the best performing worker\footnote{{If we interpret each sequence of index sets $(\mathcal{I}_t^{(m)})_{t\ge 1}$ as a different model (since different indices yield different potentials) then $Z_{1:t}^{(m)}$ is the Bayesian evidence in favour of model $m$. Let us note, however, that $Z_{1:t}^{(m)}$ is not a direct indicator of the performance of worker $m$ as an optimizer. The fact that $Z_{1:t}^{(m_1)} > Z_{1:t}^{(m_2)}$ does not necessarily imply that the estimate of $\theta^\star$ computed from worker $m_1$ is quantifiably better than the estimate computed from worker $m_2$.}}. Given the index of the best performing sampler, which is given by
\begin{align*}
m_t^\star = \argmax_{m\in\{1,\ldots,M\}} Z_{1:t}^{(m),N},
\end{align*}
we obtain a maximum-a-posteriori (MAP) estimator,
\begin{align}\label{eq:maxEstimator}
\theta_t^{\star,N} = \argmax_{i\in\{1,\ldots,N\}} \mathsf{p}_t^{(m_t^\star),N}(\theta^{(i,m_t^\star)}),
\end{align}
where $\mathsf{p}_t^{(m_t^\star),N}(\theta)$ is the kernel density estimator \citep{silverman1998density,wand1994kernel} described in Remark~\ref{rem:KDE}. Note that we do \textit{not} construct the entire density estimator and maximize it. Since this operation is performed locally on the particles from the best performing sampler, it involves $\mathcal{O}(N^2)$ operations, where $N$ is the number of particles on a single worker, which is much smaller than the total number $MN$. The full procedure is outlined in Algorithm~\ref{alg:global}.
\begin{rem}\label{rem:KDE} Let $\sk:\Theta\to(0,\infty)$ be a bounded pdf with zero mean and finite second order moment, i.e., we have $\int_\Theta \|\theta\|_2^2 \sk(\theta) \md\theta < \infty$. We can use the particle system $\{\theta_t^{(i,m)}\}_{i=1}^N$ and the pdf $\sk(\cdot)$ to construct the kernel density estimator (KDE) of $\pi_t^{(m)}(\theta)$ as
\begin{align}
\mathsf{p}_t^{(m),N}(\theta) &= \frac{1}{N}\sum_{i=1}^N \sk(\theta - \theta_t^{(i,m)})\nonumber \\
&= (k^\theta,\pi_t^{(m),N}),\label{eq:KDE}
\end{align}
where $\sk^\theta(\theta') = \sk(\theta - \theta')$. Note that $\mathsf{p}_t^{(m),N}(\theta)$ is not a standard KDE because the particles $\{\theta_t^{(i,m)}\}_{i=1}^N$ are not i.i.d. samples from $\pi_t^{(m)}(\theta)$. Eq.~\eqref{eq:KDE}, however, suggests that the estimator, $\mathsf{p}_t^{(m),N}(\theta)$ converges when the approximate measure $\pi_t^{(m),N}$ does. See \citet{crisan2014particle} for an analysis of particle KDE's. \hfill $\blacksquare$	
\end{rem}

\section{Analysis}\label{sec:Analysis}

In this section, we provide some basic theoretical guarantees for Algorithm \ref{alg:global}. In particular, we prove results regarding a sampler on a single worker $m$. To ease the notation, we skip the superscript ${}^{(m)}$ in the rest of this section and simply note that results presented below hold for every $m\in\{1,\ldots,M\}$. All proofs are deferred to the Appendix.

When constrained to a single worker $m$, the approximation $\pi_t^{N}$ is provably convergent. In particular, we have the following results that hold for every worker $m = 1,\ldots,M$.

\begin{thm}\label{thm:LocalNodeConv}
{If the sequence $(G_t)_{t\geq 1}$ satisfies Assumption~\ref{assmp:integrability}} {and the jittering kernels satisfy Assumption~\ref{assmp:kernel}}, then, for any $\varphi\in B(\Theta)$, we have
\begin{align*}
\left\|\left(\varphi,\pi_t\right) - \left(\varphi,\pi_t^{N}\right)\right\|_p \leq \frac{c_{t,p} \|\varphi\|_\infty}{\sqrt{N}}
\end{align*}
for every $t = 1,\ldots,T$ and for any $p\geq 1$, where $c_{t,p} > 0$ is a constant independent of $N$.
\end{thm}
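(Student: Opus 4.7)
The plan is to proceed by induction on $t$, bounding the $L_p$ error that each of the three algorithmic sub-steps (jittering, weighting, resampling) introduces, and using Assumption~\ref{assmp:kernel} to absorb the bias produced by the fact that the algorithm includes a jittering step which is \emph{not} present in the target recursion \eqref{eqDistFlow}.

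For the base case $t=0$, the samples $\{\theta_0^{(i)}\}_{i=1}^N$ are i.i.d.\ from $\pi_0$, so a Marcinkiewicz--Zygmund inequality for sums of i.i.d.\ bounded random variables gives $\|(\varphi,\pi_0)-(\varphi,\pi_0^N)\|_p \le c_{0,p}\|\varphi\|_\infty/\sqrt{N}$. For the inductive step, assume the bound holds at time $t-1$ with constant $c_{t-1,p}$, and decompose the error at time $t$ through the intermediate random measures $\tilde\pi_t^N=\tfrac1N\sum_i\delta_{\hat\theta_t^{(i)}}$ (after jittering) and $\hat\pi_t^N=\sum_i w_t^{(i)}\delta_{\hat\theta_t^{(i)}}$ (after weighting):
\begin{align*}
(\varphi,\pi_t^N)-(\varphi,\pi_t)
&= \bigl[(\varphi,\pi_t^N)-(\varphi,\hat\pi_t^N)\bigr] + \bigl[(\varphi,\hat\pi_t^N)-(\varphi,\pi_t)\bigr].
\end{align*}
The resampling term is controlled conditionally on $\hat\pi_t^N$: since $\theta_t^{(i)}\sim\hat\pi_t^N$ i.i.d., a conditional Marcinkiewicz--Zygmund bound yields $\|(\varphi,\pi_t^N)-(\varphi,\hat\pi_t^N)\|_p\le \bar c\|\varphi\|_\infty/\sqrt{N}$.

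The second bracket is handled by the standard identity
\begin{align*}
(\varphi,\hat\pi_t^N)-(\varphi,\pi_t)
= \frac{(\varphi G_t,\tilde\pi_t^N)-(\varphi G_t,\pi_{t-1})}{(G_t,\tilde\pi_t^N)}
 +\frac{(\varphi,\pi_t)\bigl[(G_t,\pi_{t-1})-(G_t,\tilde\pi_t^N)\bigr]}{(G_t,\tilde\pi_t^N)}.
\end{align*}
Since Assumption~\ref{assmp:integrability} gives $G_t\in B(\Theta)$ and $(G_t,\pi_{t-1})>0$, both $\varphi G_t$ and $G_t$ belong to $B(\Theta)$, and the random denominator $(G_t,\tilde\pi_t^N)$ concentrates around $(G_t,\pi_{t-1})>0$, so it suffices to prove that for every $\psi\in B(\Theta)$,
\begin{align*}
\|(\psi,\tilde\pi_t^N)-(\psi,\pi_{t-1})\|_p \le \frac{c\,\|\psi\|_\infty}{\sqrt N}.
\end{align*}
I split this further as $(\psi,\tilde\pi_t^N)-(\psi,\pi_{t-1}^N) + (\psi,\pi_{t-1}^N)-(\psi,\pi_{t-1})$. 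The second term is $O(1/\sqrt N)$ by the inductive hypothesis. For the first term I condition on $\theta_{t-1}^{(i)}$ and write $(\psi,\tilde\pi_t^N)-(\psi,\pi_{t-1}^N) = \tfrac1N\sum_i\bigl[\psi(\hat\theta_t^{(i)})-\psi(\theta_{t-1}^{(i)})\bigr]$; the conditional mean of each summand is bounded in absolute value by $c_\kappa\|\psi\|_\infty/\sqrt N$ by Assumption~\ref{assmp:kernel}, and the centred conditional sum is $O(1/\sqrt N)$ in $L_p$ by another conditional Marcinkiewicz--Zygmund inequality, again since the $\hat\theta_t^{(i)}$ are conditionally independent given $\{\theta_{t-1}^{(i)}\}$. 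Combining, propagating constants through the ratio identity, and absorbing everything into $c_{t,p}$ closes the induction.

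The main technical obstacle is the handling of the jittering step, which is the reason the algorithm does not exactly realise recursion \eqref{eqDistFlow}: without Assumption~\ref{assmp:kernel}, the replacement of $\pi_{t-1}^N$ by its jittered version would introduce an $O(1)$ bias. The crucial quantitative input is that $\epsilon_N\le 1/\sqrt N$ in \eqref{eq:kernelDefn} makes the bias of the kernel exactly of order $1/\sqrt N$, which is compatible with the Monte Carlo rate; the rest is a careful bookkeeping of constants (which depend on $\|G_t\|_\infty$, $(G_t,\pi_{t-1})^{-1}$, $c_\kappa$, and $c_{t-1,p}$, and therefore generally grow with $t$, as reflected in the $t$-dependence of $c_{t,p}$).
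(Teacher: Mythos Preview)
Your strategy is essentially the one in the paper: induction on $t$, with Marcinkiewicz--Zygmund for the i.i.d.\ base case and for the (conditional) sampling steps, Assumption~\ref{assmp:kernel} to control the jittering bias, and a ratio identity for the weighting step. Your three-way split of the jittering error (induction hypothesis $+$ bias of $\kappa$ $+$ conditional MZ for the centred sum) is exactly what the paper does, though the paper writes the intermediate measure $\kappa\pi_{t-1}^N$ explicitly rather than phrasing it as ``conditional mean of each summand''.

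There is one point where your argument is looser than the paper's and, as written, is a gap. In the weighting step you use the identity with the \emph{random} denominator $(G_t,\tilde\pi_t^N)$ and then say it ``concentrates around $(G_t,\pi_{t-1})>0$'' so that it suffices to bound the numerators in $L_p$. Concentration of $(G_t,\tilde\pi_t^N)$ in $L_p$ does not by itself give an $L_p$ bound on its reciprocal: under Assumption~\ref{assmp:integrability} alone, $G_t>0$ need not be bounded away from zero, so $(G_t,\tilde\pi_t^N)$ can be arbitrarily small with positive probability. You can repair this either by an event-splitting argument (use that the left-hand side is trivially $\le 2\|\varphi\|_\infty$ on $\{(G_t,\tilde\pi_t^N)\le (G_t,\pi_{t-1})/2\}$ and bound the probability of that event via Markov), or, much more simply, by doing what the paper does: add and subtract $\frac{(\varphi G_t,\tilde\pi_t^N)}{(G_t,\pi_{t-1})}$ instead, so that the \emph{deterministic} constant $(G_t,\pi_{t-1})$ sits in the denominator and the random ratio $\frac{(\varphi G_t,\tilde\pi_t^N)}{(G_t,\tilde\pi_t^N)}$ is bounded by $\|\varphi\|_\infty$ pointwise. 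With that adjustment your proof matches the paper's.
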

\begin{proof} See Appendix~\ref{proof:LocalNodeConv}.\end{proof}

Theorem~\ref{thm:LocalNodeConv} states that the samplers on local workers converge to their correct probability measures (for each $m$) with rate $\mathcal{O}(1/\sqrt{N})$, which is standard for Monte Carlo methods. Next we provide an upper bound for the random error $| (\varphi,\pi_t) - (\varphi,\pi_t^N) |$.

\begin{corollary}\label{corr:almostSureConv} 
Under the assumptions of Theorem~\ref{thm:LocalNodeConv}, for every $\varphi \in B(\Theta)$, we have
\begin{align*}
\left|(\varphi,\pi_t^{N}) - (\varphi,\pi_t)\right| \leq \frac{U_{t,\delta}}{N^{\frac{1}{2}-\delta}},
\quad \mbox{and} \quad 1 \le t \le T,
\end{align*}
where $U_{t,\delta}$ is an a.s. finite random variable and $0 < \delta < \frac{1}{2}$ is an arbitrary constant independent of $N$. In particular,
\begin{align}\label{eq:AlmostSureConv}
\lim_{N\to\infty} (\varphi,\pi_t^{N}) = (\varphi,\pi_t) \quad\quad \textnormal{a.s.}
\end{align}
for any $\varphi\in B(\Theta)$.
\end{corollary}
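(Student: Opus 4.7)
The plan is to upgrade the $L^p$ bound from Theorem~\ref{thm:LocalNodeConv} to an almost sure rate through a standard summability (Borel--Cantelli) argument. Fix $\varphi\in B(\Theta)$, $t\in\{1,\ldots,T\}$ and $\delta\in(0,1/2)$, and introduce the nonnegative random variables
$$
Y_N := N^{\frac{1}{2}-\delta} \left|(\varphi,\pi_t^N) - (\varphi,\pi_t)\right|, \quad N \geq 1.
$$
If I can show that $\sup_N Y_N < \infty$ almost surely, then setting $U_{t,\delta} := \sup_N Y_N$ makes the first displayed inequality of the corollary hold by construction for every $N$.

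To this end I would select an integer $p$ large enough that $p\delta > 1$ (e.g.\ $p > 1/\delta$). Raising the bound of Theorem~\ref{thm:LocalNodeConv} to the $p$-th power yields
$$
\bE\left[Y_N^p\right] = N^{p\left(\frac{1}{2}-\delta\right)} \bE\left[\left|(\varphi,\pi_t^N)-(\varphi,\pi_t)\right|^p\right] \leq \frac{c_{t,p}^{\,p}\,\|\varphi\|_\infty^{\,p}}{N^{p\delta}},
$$
which is summable in $N$. By Tonelli's theorem,
$$
\bE\left[\sum_{N=1}^{\infty} Y_N^p\right] = \sum_{N=1}^{\infty} \bE\left[Y_N^p\right] < \infty,
$$
so $\sum_{N=1}^\infty Y_N^p < \infty$ almost surely. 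In particular $\sup_{N} Y_N < \infty$ a.s., which simultaneously proves that $U_{t,\delta} = \sup_N Y_N$ is an almost surely finite random variable and delivers the desired inequality $|(\varphi,\pi_t^N)-(\varphi,\pi_t)| \leq U_{t,\delta}/N^{1/2-\delta}$ for every $N \geq 1$.

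The almost sure convergence~\eqref{eq:AlmostSureConv} then drops out for free: the summability $\sum_N Y_N^p < \infty$ forces $Y_N \to 0$ a.s., which implies $(\varphi,\pi_t^N)\to(\varphi,\pi_t)$ almost surely as $N\to\infty$. There is no real obstacle in the argument; Theorem~\ref{thm:LocalNodeConv} already provides an $L^p$ bound for \emph{every} $p \geq 1$, and the only step requiring any care is the choice of $p = p(\delta)$ large enough for the series $\sum_N N^{-p\delta}$ to converge. The random constant $U_{t,\delta}$ produced by this construction is allowed to depend on $\varphi$, $t$ and $\delta$, but not on $N$.
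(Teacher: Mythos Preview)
Your argument is correct and matches the paper's proof essentially step for step: both use the $L^p$ bound of Theorem~\ref{thm:LocalNodeConv} to show that $\sum_N N^{p(\frac{1}{2}-\delta)}\bE\big[|(\varphi,\pi_t^N)-(\varphi,\pi_t)|^p\big]<\infty$ for a suitably large $p$, deduce that the corresponding random series is a.s.\ finite, and extract the a.s.\ rate from the finiteness of its terms. The only cosmetic differences are that the paper fixes $p\ge 4$ and $0<\epsilon<1$ first and then sets $\delta=(1+\epsilon)/p$, and defines $U_{t,\delta}$ as the $p$-th root of the full sum rather than as $\sup_N Y_N$.
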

\begin{proof}
See Appendix~\ref{proof:corr:almostSureConv}.
\end{proof}
This result ensures that the \textit{random error} made by the estimators vanishes as $N\to\infty$. Moreover, it provides us with a rate $\mathcal{O}(1/\sqrt{N})$ since the constant $\delta > 0$ can be chosen arbitrarily small.

These results are important because they enable us to analyze the properties of the kernel density estimators constructed using the samples at each worker. In order to be able to do so, we need to impose regularity conditions on the sequence of densities $\pi_t(\theta)$ and the kernels we use to approximate them.
\begin{assmp}\label{assmp:densityLipschitz} 
For every $\theta \in \Theta$, the derivatives $\sD^\alpha \pi_t(\theta)$ exist and they are Lipschitz continuous, i.e., there is a constant $L_{\alpha,t} > 0$ such that
\begin{align*}
| \sD^\alpha \pi_t(\theta) - \sD^\alpha \pi_t(\theta')| \leq L_{\alpha,t} \|\theta - \theta'\|
\end{align*}
for all $\theta,\theta' \in \Theta$, $t = 1,\ldots,T$ and for all $\alpha = (\alpha_1,\ldots,\alpha_d) \in \{0,1\}^d$.
\end{assmp}

Note that for $\alpha = (0,\ldots,0)$ it is not hard to relate Assumption~\ref{assmp:densityLipschitz} directly to the cost function as we do in the following proposition.
\begin{prop}\label{prop:Lipschitz} 
Assume that we define the incremental cost functions
\begin{align*}
F_t(\theta) = \sum_{i \in \bigcup_{k=1}^t \cI_k} f_i(\theta)
\end{align*}
and there exists some $\ell_t$ such that
\begin{align*}
|F_t(\theta) - F_t(\theta')| \leq \ell_t \|\theta - \theta'\|,
\end{align*}
i.e., $F_t$ is Lipschitz. Assume there exists $F^\star_t = \min_{\theta\in\Theta} F_t(\theta)$ such that $|F^\star_t| < \infty$ and recall that $\pi_t(\theta) \propto \exp(-F_t(\theta))$. Then we have the following inequality,
\begin{align*}
|\pi_t(\theta) - \pi_t(\theta')| \leq \frac{\ell_t \exp(-F_t^\star)}{Z_{\pi_t}} \|\theta - \theta'\|_2
\end{align*}
where $Z_{\pi_t} = \int_\Theta \exp(-F_t(\theta)) \md \theta$.
\end{prop}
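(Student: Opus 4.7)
The plan is to write $\pi_t(\theta) = \exp(-F_t(\theta))/Z_{\pi_t}$ explicitly and then reduce the problem to bounding the Lipschitz constant of the map $\theta \mapsto \exp(-F_t(\theta))$. Since $Z_{\pi_t}$ is a positive constant (finite and bounded away from zero, which follows from $\Theta$ being compact and $F_t$ being continuous, hence bounded), we obtain
\[
|\pi_t(\theta) - \pi_t(\theta')| = \frac{1}{Z_{\pi_t}} \bigl| \exp(-F_t(\theta)) - \exp(-F_t(\theta')) \bigr|,
\]
so everything reduces to controlling the numerator.

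The key step is a standard one-variable estimate. For $a,b \in \bR$ we have
\[
\bigl| e^{-a} - e^{-b} \bigr| = \left| \int_a^b e^{-s}\, \md s \right| \leq e^{-\min(a,b)}\, |a-b|,
\]
because $e^{-s}$ is monotonically decreasing. Applying this with $a = F_t(\theta)$ and $b = F_t(\theta')$, and using that $F_t^\star \leq \min(F_t(\theta), F_t(\theta'))$ for every $\theta, \theta' \in \Theta$, gives the bound
\[
\bigl| \exp(-F_t(\theta)) - \exp(-F_t(\theta')) \bigr| \leq \exp(-F_t^\star)\, |F_t(\theta) - F_t(\theta')|.
\]
Combining this with the Lipschitz assumption $|F_t(\theta) - F_t(\theta')| \leq \ell_t \|\theta - \theta'\|_2$ and dividing by $Z_{\pi_t}$ yields the claimed inequality.

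There is no real obstacle here; the only subtle point is recognizing that $F_t^\star$ (and not $F_t(\theta)$ or $F_t(\theta')$ individually) is the correct quantity to put in the bound, which is forced by the fact that the exponential estimate uses the \emph{minimum} of the two arguments and hence produces the \emph{largest} exponential prefactor. The finiteness of $F_t^\star$ assumed in the hypothesis is what makes the resulting constant meaningful, while the positivity of $Z_{\pi_t}$ (needed to make the division legitimate) follows immediately from $\exp(-F_t)$ being strictly positive and continuous on the compact set $\Theta$.
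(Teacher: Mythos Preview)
Your proof is correct and follows essentially the same route as the paper: reduce to bounding $|e^{-F_t(\theta)}-e^{-F_t(\theta')}|$, show this is at most $e^{-F_t^\star}\,|F_t(\theta)-F_t(\theta')|$, apply the Lipschitz hypothesis on $F_t$, and divide by $Z_{\pi_t}$. The only cosmetic difference is in the one-variable estimate: the paper splits into the two cases $F_t(\theta)\lessgtr F_t(\theta')$ and uses $e^a\ge 1+a$, whereas you obtain the same bound in one stroke via the integral representation $|e^{-a}-e^{-b}|=\bigl|\int_a^b e^{-s}\,\md s\bigr|\le e^{-\min(a,b)}|a-b|$; both arguments yield the identical inequality.
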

\begin{proof} See Appendix~\ref{proof:Lipschitz}. \end{proof}

Next, we state assumptions on the kernel $\sk$. We first note that the kernels in practice are defined with a bandwidth parameter $h\in\bR_+$. In particular, given a kernel $\sk$, we can define scaled kernels $\sk_h$ as
\begin{align*}
\sk_h(\theta) = h^{-d} \sk(h^{-1} \theta), \quad\quad h > 0,
\end{align*}
where, we recall, $d$ is the dimension of the parameter vector $\theta$. Hence, given $\sk$ we define a family of kernels $\{\sk_h, h\in\bR_+\}$.

\begin{assmp}\label{assmp:kernelFull} 
The kernel $\sk:\Theta \to (0,\infty)$ is a zero-mean bounded pdf, i.e., $\sk(\theta) \geq 0$ $\forall \theta\in\Theta$ and $\int \sk(\theta) \md \theta = 1$. The second moment of this density is bounded, i.e., $\int_\Theta \|\theta\|^2 \sk(\theta) \md \theta < \infty$. Finally, $\sD^\alpha \sk \in \sC_b(\Theta)$, i.e., $\|\sD^\alpha\sk\|_\infty~<~\infty$ for any $\alpha \in \{0,1\}^d$.
\end{assmp}
\begin{rem} We note that Assumption~\ref{assmp:kernelFull} implies that $\sD^\alpha \sk_h \in \sC_b(\Theta)$ and we have $\|\sD^\alpha \sk_h\|_\infty = \frac{1}{h^{d + |\alpha|}} \|\sD^\alpha \sk\|_\infty$ for any $h > 0$ and $\alpha\in\{0,1\}^d$.\hfill $\blacksquare$	
\end{rem}
We denote \textit{the kernel density estimator} defined using a scaled kernel $\sk_h$ and the empirical measure $\pi_t^N$ as $\sp_t^{h,N}(\theta)$. In particular, given a normalized kernel (a pdf) $\sk:\Theta\to(0,\infty)$, satisfying the assumptions in Assumption~\ref{assmp:kernelFull}, we can construct the KDE
\begin{align*}
\mathsf{p}_t^{h,N}(\theta) &= (\sk^{\theta}_h,\pi_t^{N}).
\end{align*}
where $\sk_h^\theta(\theta') = \sk_h(\theta - \theta')$ (see Remark~\ref{rem:KDE}). Now, we are ready to state the main results regarding the kernel density estimators, adapted from \citet{crisan2014particle}.

\begin{thm}\label{thm:uniformResult}
Choose
\begin{align}
 h = \left\lfloor{N^{\frac{1}{2 (d + 1)}}}\right\rfloor^{-1}
\end{align}
and denote $\sp_t^N(\theta) = \sp_t^{h,N}(\theta)$ (since $h = h(N)$). If Assumptions~\ref{assmp:integrability},~\ref{assmp:kernel},~\ref{assmp:densityLipschitz} and \ref{assmp:kernelFull} hold, and $\Theta$ is compact, then
\begin{align}\label{eq:thm:uniformResultRate}
\sup_{\theta\in\Theta} | \sp_t^N(\theta) - \pi_t(\theta)| \leq \frac{V_\varepsilon}{N^{\frac{1-\varepsilon}{2 (d + 1)}}}
\end{align}
where $V_\varepsilon \geq 0$ is an a.s. finite random variable and $0< \varepsilon < 1$ is a constant, both of which are independent of $N$ and $\theta$. In particular,
\begin{align}\label{eq:thm:uniformResult}
\lim_{N\to\infty} \sup_{\theta\in\Theta} | \sp_t^N(\theta) - \pi_t(\theta)| = 0 \quad \quad \textnormal{a.s.}
\end{align}
\end{thm}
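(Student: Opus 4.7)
My plan is to prove Theorem~\ref{thm:uniformResult} through a standard bias--variance decomposition for kernel density estimators, treating the two pieces separately and then balancing them with the prescribed bandwidth. For any $\theta \in \Theta$, write
\begin{align*}
\sp_t^N(\theta) - \pi_t(\theta) = \underbrace{(\sk_h^\theta, \pi_t^N) - (\sk_h^\theta, \pi_t)}_{\text{stochastic error}} + \underbrace{(\sk_h^\theta, \pi_t) - \pi_t(\theta)}_{\text{bias}}.
\end{align*}
The goal is a uniform (in $\theta$) estimate for each term.

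For the bias term, I would change variables $u = (\theta - \theta')/h$, noting that $\int \sk(u)\md u = 1$, and invoke the Lipschitz continuity of $\pi_t$ (Assumption~\ref{assmp:densityLipschitz} with $\alpha=\f0$) together with the finite first absolute moment of $\sk$ (implied by the bounded second moment in Assumption~\ref{assmp:kernelFull}). This produces a deterministic bound $\sup_\theta |(\sk_h^\theta,\pi_t)-\pi_t(\theta)| \leq C_{\text{bias}} \cdot h$, up to a vanishing boundary correction stemming from the fact that $\Theta$ is compact and $\pi_t$ is bounded.

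The harder piece is the stochastic term, because Corollary~\ref{corr:almostSureConv} is pointwise in the test function. Applied to the bounded test $\sk_h^\theta$, for any fixed $\theta$ and any $\delta \in (0,1/2)$ it gives
\begin{align*}
|(\sk_h^\theta,\pi_t^N) - (\sk_h^\theta,\pi_t)| \leq \frac{U_{t,\delta}\,\|\sk_h\|_\infty}{N^{1/2-\delta}} \leq \frac{U_{t,\delta}\,\|\sk\|_\infty}{h^d\, N^{1/2-\delta}}.
\end{align*}
To upgrade this to a uniform bound I would cover the compact set $\Theta$ by a grid $\{\theta_j\}_{j=1}^J$ of mesh $\eta$ polynomial in $N^{-1}$, control all $J$ grid points simultaneously by combining the $L_p$ bound of Theorem~\ref{thm:LocalNodeConv} with a Borel--Cantelli argument, and then interpolate between grid points using the Lipschitz regularity of $\theta \mapsto \sp_t^N(\theta)-\pi_t(\theta)$. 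That regularity is controlled by $\|\sD^\alpha \sk_h\|_\infty \leq \|\sD^\alpha \sk\|_\infty / h^{d+1}$ for $|\alpha|=1$ (Assumption~\ref{assmp:kernelFull}) on the empirical side, and by Assumption~\ref{assmp:densityLipschitz} on the $\pi_t$ side. Tuning $\eta$ so that the interpolation error matches the on-grid error costs only an arbitrarily small polynomial factor $N^{\delta'}$.

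Finally, I would balance the two contributions: the bias is of order $h$ and the uniform stochastic error is of order $h^{-d} N^{-1/2+\delta'}$. The choice $h = \lfloor N^{1/(2(d+1))}\rfloor^{-1}$ is exactly the one that equates these two scales, yielding a combined bound of order $N^{-1/(2(d+1)) + \delta''}$, which we repackage as $N^{-(1-\varepsilon)/(2(d+1))}$ for a suitable $\varepsilon \in (0,1)$ absorbing $\delta$ and the grid overhead. The almost-sure statement \eqref{eq:thm:uniformResult} follows because $V_\varepsilon$ is a.s.\ finite (Borel--Cantelli on the grid-based $L_p$ bounds) and the exponent is strictly positive. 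The main technical obstacle is the uniformization step: passing from the pointwise Corollary~\ref{corr:almostSureConv} to a supremum over $\theta$ while forfeiting no more than a factor $N^\varepsilon$ in the rate.
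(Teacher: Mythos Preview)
Your proposal is sound, but you should be aware that the paper itself does not carry out this argument: its ``proof'' of Theorem~\ref{thm:uniformResult} consists of a citation to Theorem~4.2 and Corollary~4.1 of \cite{crisan2014particle}, followed by the elementary observation that $\sup_{a\ge 1} a/\lfloor a\rfloor = 2$ to pass from $\lfloor N^{1/(2(d+1))}\rfloor^{-(1-\varepsilon)}$ to $N^{-(1-\varepsilon)/(2(d+1))}$. So you are not so much giving a different proof as reconstructing, from first principles, the content that the paper outsources.

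That said, the reconstruction is essentially correct and follows the standard route: bias--variance split, Lipschitz control of the bias via Assumption~\ref{assmp:densityLipschitz}, and a covering/Borel--Cantelli argument to make the stochastic error uniform in $\theta$, with the derivative bounds $\|\sD^\alpha\sk_h\|_\infty = h^{-(d+|\alpha|)}\|\sD^\alpha\sk\|_\infty$ supplying the needed regularity for interpolation off the grid. One point worth flagging: in your first display for the stochastic term you invoke Corollary~\ref{corr:almostSureConv} with the a.s.\ finite constant $U_{t,\delta}$ as if it were independent of the test function, but in the proof of that corollary $U_{t,\delta}$ is built from the specific $\varphi$ and so a priori depends on $\theta$ through $\sk_h^\theta$. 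You implicitly fix this in the next sentence by going back to the $L_p$ estimate of Theorem~\ref{thm:LocalNodeConv} on a finite grid and running Borel--Cantelli there, which is the right repair; just make sure the write-up does not lean on Corollary~\ref{corr:almostSureConv} for the uniform step. With that caveat, your bandwidth balance $h\asymp N^{-1/(2(d+1))}$ and the absorption of the grid overhead into~$\varepsilon$ match exactly the rate the theorem claims.
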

\begin{proof}
It follows from the proof of Theorem~4.2 and Corollary~4.1 in \citet{crisan2014particle}. See Appendix~\ref{proof:uniformResult} for an outline.
\end{proof}
This theorem is a uniform convergence result, i.e., it holds uniformly in a compact parameter space $\Theta$. We note that Theorem~\ref{thm:uniformResult} specifies the dependence of the bandwidth $h$ on the number of Monte Carlo samples $N$ for convergence to be attained at that rate. Based on this result, we can relate the empirical maxima to the true maxima.
\begin{thm}\label{thm:FinalConvergenceRate} 
Let $\theta_t^{\star,N} \in \argmax_{i\in\{1,\ldots,N\}} \sp_t^{N}(\theta_t^{(i)})$ be an estimate of a global maximum of $\pi_t$ and let $\theta_t^{\star} \in \argmax_{\theta\in\Theta} \pi_t(\theta)$ be an actual global maximum. If $\Theta$ is compact, $\pi_t$ is continuous at $\theta_t^\star$ and Assumptions~\ref{assmp:integrability},~\ref{assmp:kernel},~\ref{assmp:densityLipschitz} and \ref{assmp:kernelFull} hold, then for $N$ sufficiently large
\begin{align*}
\pi_t(\theta_t^\star) - \pi_t(\theta_t^{\star,N}) \leq \frac{W_{t,d,\varepsilon}}{N^{\frac{1-\varepsilon}{2 (d + 1)}}}, \quad 1 \le t \le T,
\end{align*}
where $\varepsilon \in (0,1)$ is an arbitrarily small constant and $W_{t,d,\varepsilon}$ is an a.s. finite random variable, both independent of $N$.
\end{thm}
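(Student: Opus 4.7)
The plan is to reduce the maximisation error to the uniform kernel density error from Theorem~\ref{thm:uniformResult}, together with a localisation step showing that at least one particle lies inside a shrinking ball around $\theta_t^\star$. Write $\eta_N := \sup_{\theta \in \Theta} |\sp_t^N(\theta) - \pi_t(\theta)|$; Theorem~\ref{thm:uniformResult} already provides $\eta_N \leq V_\varepsilon N^{-(1-\varepsilon)/(2(d+1))}$ a.s.\ for $N$ large. The first step is the telescoping identity
\begin{align*}
\pi_t(\theta_t^\star) - \pi_t(\theta_t^{\star,N})
&= \bigl[\pi_t(\theta_t^\star) - \pi_t(\theta_t^{(j)})\bigr] + \bigl[\pi_t(\theta_t^{(j)}) - \sp_t^N(\theta_t^{(j)})\bigr] \\
&\quad + \bigl[\sp_t^N(\theta_t^{(j)}) - \sp_t^N(\theta_t^{\star,N})\bigr] + \bigl[\sp_t^N(\theta_t^{\star,N}) - \pi_t(\theta_t^{\star,N})\bigr],
\end{align*}
which holds for any index $j \in \{1,\ldots,N\}$. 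Because $\theta_t^{\star,N}$ maximises $\sp_t^N$ over the particle cloud, the third bracket is non-positive; the second and fourth brackets are each bounded by $\eta_N$ in absolute value. Hence, for every $j$,
\begin{align*}
\pi_t(\theta_t^\star) - \pi_t(\theta_t^{\star,N}) \leq \bigl[\pi_t(\theta_t^\star) - \pi_t(\theta_t^{(j)})\bigr] + 2 \eta_N.
\end{align*}

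It remains to choose $j$ so that $\theta_t^{(j)}$ is close to $\theta_t^\star$. Assumption~\ref{assmp:densityLipschitz} with $\alpha = (0,\ldots,0)$ furnishes a Lipschitz constant $L_t$ for $\pi_t$, so it suffices to place a particle within distance $r(N) := N^{-\beta}$ of $\theta_t^\star$ for some $\beta < 1/(2d)$. Since $\pi_t$ is continuous at $\theta_t^\star$ with $\pi_t(\theta_t^\star)>0$ (a density on a compact set attaining its maximum), one obtains $\pi_t(B_N) \geq c\, r(N)^d$ for $B_N := B(\theta_t^\star, r(N))$ and some $c>0$, for $N$ large. I would then apply Theorem~\ref{thm:LocalNodeConv} to $\varphi = \mathbf{1}_{B_N}$, giving $\bE | \pi_t^N(B_N) - \pi_t(B_N) |^p \leq c_{t,p} N^{-p/2}$; Markov's inequality then yields
\begin{align*}
\bP\!\left( \pi_t^N(B_N) < \tfrac{1}{2} \pi_t(B_N) \right) \leq C_p\, N^{-p(1/2 - \beta d)}.
\end{align*}
Choosing $p$ large enough that the exponent exceeds $1$, the Borel--Cantelli lemma ensures this event occurs only finitely often, so a.s.\ for $N$ sufficiently large $\pi_t^N(B_N) > 0$, meaning some particle lies in $B_N$. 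Combined with the Lipschitz bound, this gives $\pi_t(\theta_t^\star) - \pi_t(\theta_t^{(j)}) \leq L_t N^{-\beta}$.

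Plugging this into the earlier inequality yields $\pi_t(\theta_t^\star) - \pi_t(\theta_t^{\star,N}) \leq L_t N^{-\beta} + 2 V_\varepsilon N^{-(1-\varepsilon)/(2(d+1))}$. Since $1/(2d) > (1-\varepsilon)/(2(d+1))$ for every $\varepsilon \geq 0$, one can pick $\beta \in \bigl((1-\varepsilon)/(2(d+1)),\, 1/(2d)\bigr)$, in which case the first term is dominated by the second and the advertised rate follows with an a.s.\ finite random constant $W_{t,d,\varepsilon}$. The main obstacle is the localisation step: because the indicator $\mathbf{1}_{B_N}$ itself depends on $N$, Corollary~\ref{corr:almostSureConv} cannot be cited directly (its random constant is test-function dependent), and one must work at the $L^p$ level of Theorem~\ref{thm:LocalNodeConv}, choosing the moment $p$ large enough for Borel--Cantelli to absorb the shrinking mass $\pi_t(B_N)$.
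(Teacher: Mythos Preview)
Your argument is correct and follows the same two-stage strategy as the paper: first a localisation step guaranteeing that some particle $\theta_t^{(j)}$ falls within a shrinking ball around $\theta_t^\star$, then the uniform KDE bound from Theorem~\ref{thm:uniformResult} combined with the Lipschitz property of $\pi_t$. The telescoping decomposition you use is somewhat tidier than the paper's; the paper introduces an auxiliary maximiser $\hat\theta_t^{\star,N}\in\arg\max_{\theta\in\Theta}\sp_t^N(\theta)$ and chains inequalities through it, whereas your four-term identity with the non-positive third bracket reaches the same inequality $\pi_t(\theta_t^\star)-\pi_t(\theta_t^{\star,N})\le[\pi_t(\theta_t^\star)-\pi_t(\theta_t^{(j)})]+2\eta_N$ more directly.

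The more substantive difference is in the localisation step. The paper invokes the a.s.\ bound of Corollary~\ref{corr:almostSureConv} on $\mathbf{1}_{B_{t,n}^\star}$ and then lets the ball radius $n^{-1}$ shrink with $N$; but, as you correctly note, the random constant $U_{t,\delta}$ in that corollary is test-function dependent, so making $n$ depend on $N$ is delicate. Your route---applying Theorem~\ref{thm:LocalNodeConv} at the $L^p$ level with $p$ chosen large enough that $p(1/2-\beta d)>1$, then Markov plus Borel--Cantelli---handles the $N$-dependence of $\mathbf{1}_{B_N}$ cleanly, since the constant $c_{t,p}$ in Theorem~\ref{thm:LocalNodeConv} is uniform over $\|\varphi\|_\infty\le 1$. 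This is arguably a more careful execution of the same idea. The final rate matching is identical: both proofs exploit $(1-\varepsilon)/(2(d+1))<1/(2d)$ so that the localisation error $L_t N^{-\beta}$ can be absorbed into the KDE rate.
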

\begin{proof}
See Appendix~\ref{proof:thm:FinalConvergenceRate}.
\end{proof}

\begin{rem}
By choosing $t=T$, Theorem \ref{thm:FinalConvergenceRate} provides a convergence rate for the MAP estimator $\theta_T^\star$, which is also the approximate solution of problem \eqref{eqCostFunction}. \hfill \qedsymbol
\end{rem}


Theorem~\ref{thm:FinalConvergenceRate} also yields a convergence rate for the error $f(\theta_T^{\star,N}) - f(\theta^\star)$, where $f(\cdot)$ is the original cost function in problem \eqref{eqCostFunction}, provided that the prior is chosen so that $\pi_T(\theta) \propto \exp(-f(\theta))$ (see Remark~\ref{rem:Prior}).
\begin{corollary}\label{cor:ConvergenceRateCostFunc} 
Choose any 
$$
\theta^\star \in \argmin_{\theta\in\Theta} f(\theta) 
\quad \mbox{and} \quad
\theta_T^{\star,N} \in \argmax_{i\in\{1,\ldots,N\}} \sp_T^{N}(\theta_T^{(i)}).
$$
Under the same assumptions as in Theorem \ref{thm:FinalConvergenceRate}, if $\| f \|_\infty<\infty$ then we have
\begin{align*}
0 \le f(\theta_T^{\star,N}) - f(\theta^\star) \leq  \frac{\tilde{W}_{T,d,\varepsilon}}{N^{\frac{1-\varepsilon}{2 (d + 1)}}},
\end{align*}
where $\tilde{W}_{T,d,\varepsilon}$ is an a.s. finite random variable.
\end{corollary}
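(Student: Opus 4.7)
The lower bound $0\le f(\theta_T^{\star,N})-f(\theta^\star)$ is immediate from the definition of $\theta^\star$ as a global minimizer of $f$, so the work is in the upper bound. My first step is to recognize that under the prior choice in Remark \ref{rem:Prior}, $\pi_T(\theta)=Z_T^{-1}\exp(-f(\theta))$ with normalizing constant $Z_T=\int_\Theta \exp(-f(\theta))\,\md\theta$. Since $\exp(-\cdot)$ is strictly decreasing, any $\theta^\star\in\argmin_\theta f(\theta)$ is simultaneously a point in $\argmax_\theta \pi_T(\theta)$; in particular we may take $\theta_T^\star=\theta^\star$ in the statement of Theorem \ref{thm:FinalConvergenceRate}.

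Next, I invoke Theorem \ref{thm:FinalConvergenceRate} directly at $t=T$ to obtain
\begin{align*}
\pi_T(\theta^\star)-\pi_T(\theta_T^{\star,N}) \;\le\; \frac{W_{T,d,\varepsilon}}{N^{(1-\varepsilon)/(2(d+1))}}.
\end{align*}
Substituting $\pi_T=Z_T^{-1}\exp(-f)$ and multiplying through by $Z_T$ yields
\begin{align*}
\exp\bigl(-f(\theta^\star)\bigr)-\exp\bigl(-f(\theta_T^{\star,N})\bigr) \;\le\; \frac{Z_T\,W_{T,d,\varepsilon}}{N^{(1-\varepsilon)/(2(d+1))}}.
\end{align*}
Note that $Z_T\le|\Theta|\exp(\|f\|_\infty)<\infty$ by compactness of $\Theta$ together with $\|f\|_\infty<\infty$, so the right-hand side is controlled by an a.s.\ finite random variable of the right order.

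The remaining step is to transfer this bound on the exponentials back to a bound on $f$ itself. Writing $a=f(\theta^\star)$ and $b=f(\theta_T^{\star,N})$, the mean value theorem applied to $x\mapsto e^{-x}$ gives $e^{-a}-e^{-b}=e^{-c}(b-a)$ for some $c\in[a,b]$. Because $\|f\|_\infty<\infty$, we have $|c|\le\|f\|_\infty$, hence $e^{-c}\ge\exp(-\|f\|_\infty)>0$, and rearranging yields
\begin{align*}
b-a \;\le\; \exp(\|f\|_\infty)\,\bigl(e^{-a}-e^{-b}\bigr).
\end{align*}
Combining with the previous display gives the claimed rate with $\tilde W_{T,d,\varepsilon}:=Z_T\exp(\|f\|_\infty)\,W_{T,d,\varepsilon}$, which is a.s.\ finite since $W_{T,d,\varepsilon}$ is.

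I do not expect any genuine obstacle here; the proof is essentially a deterministic post-processing of Theorem \ref{thm:FinalConvergenceRate}. The only point worth being careful about is ensuring that the Lipschitz-style conversion from $|e^{-a}-e^{-b}|$ to $|b-a|$ uses a constant that is non-random and finite, which is precisely where the hypothesis $\|f\|_\infty<\infty$ is consumed; without it the multiplicative factor $e^{-c}$ in the mean value theorem could vanish and the argument would fail.
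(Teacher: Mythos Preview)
Your proof is correct and follows essentially the same route as the paper: apply Theorem~\ref{thm:FinalConvergenceRate} at $t=T$, multiply through by the normalizing constant, and convert the bound on $e^{-f(\theta^\star)}-e^{-f(\theta_T^{\star,N})}$ into a bound on $f(\theta_T^{\star,N})-f(\theta^\star)$ using $\|f\|_\infty<\infty$. The only cosmetic difference is that the paper uses the elementary inequality $e^{x}\ge 1+x$ (after factoring out $e^{-f(\theta_T^{\star,N})}$) where you invoke the mean value theorem, but both arguments produce the identical constant $\tilde W_{T,d,\varepsilon}=Z_{\pi_T}\,e^{\|f\|_\infty}\,W_{T,d,\varepsilon}$.
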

\begin{proof}
See Appendix~\ref{proof:ConvergenceRateCostFunc}.
\end{proof}

Finally, we obtain a convergence rate for the expected error.
\begin{corollary}\label{corr:ErgodicConvergenceRateCostFunc} 
Choose any 
$$
\theta^\star \in \argmin_{\theta\in\Theta} f(\theta) 
\quad \mbox{and} \quad
\theta_T^{\star,N} \in \argmax_{i\in\{1,\ldots,N\}} \sp_T^{N}(\theta_T^{(i)}).
$$
Under the same assumptions as in Theorem \ref{thm:FinalConvergenceRate}, if $\| f \|_\infty<\infty$ then we have
\begin{align*}
0 \le \bE[f(\theta_T^{\star,N})] - f(\theta^\star) \leq  \frac{C_{T,d,\varepsilon}}{N^{\frac{1-\varepsilon}{2 (d + 1)}}},
\end{align*}
where $C_{T,d,\varepsilon} = \bE[\tilde{W}_{T,d,\varepsilon}] < \infty$ is a constant independent of $N$.
\end{corollary}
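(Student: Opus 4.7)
The plan is to lift the almost-sure bound in Corollary~\ref{cor:ConvergenceRateCostFunc} to an expectation bound, so the whole argument reduces to verifying that $C_{T,d,\varepsilon} := \bE[\tilde{W}_{T,d,\varepsilon}]$ is finite. Under the hypotheses of Theorem~\ref{thm:FinalConvergenceRate} together with $\|f\|_\infty<\infty$, Corollary~\ref{cor:ConvergenceRateCostFunc} already provides
$$
0 \le f(\theta_T^{\star,N}) - f(\theta^\star) \leq \frac{\tilde{W}_{T,d,\varepsilon}}{N^{(1-\varepsilon)/(2(d+1))}} \quad \text{a.s.},
$$
so that taking expectations on both sides and pulling out the deterministic factor $N^{-(1-\varepsilon)/(2(d+1))}$ immediately yields the stated inequality with $C_{T,d,\varepsilon}=\bE[\tilde{W}_{T,d,\varepsilon}]$. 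The entire work therefore reduces to establishing $\bE[\tilde{W}_{T,d,\varepsilon}]<\infty$.

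To bound this expectation, I would trace the construction of $\tilde{W}$ through Theorems~\ref{thm:FinalConvergenceRate} and~\ref{thm:uniformResult} back to the random constant $V_\varepsilon$ appearing in \eqref{eq:thm:uniformResultRate}. Inspection of those proofs (and of the parent result in \cite{crisan2014particle} invoked by Theorem~\ref{thm:uniformResult}) shows that $\tilde{W}_{T,d,\varepsilon}$ is a deterministic function of $\|f\|_\infty$, $\pi_T(\theta_T^\star)$ and $V_\varepsilon$, while $V_\varepsilon$ itself is produced by a Borel--Cantelli argument applied to the $L_p$ estimates in Theorem~\ref{thm:LocalNodeConv}. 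Concretely, one may take
$$
V_\varepsilon^q \;=\; \sum_{N=1}^\infty N^{q(1-\varepsilon)/(2(d+1))}\, \sup_{\theta\in\Theta}|\mathsf{p}_T^N(\theta)-\pi_T(\theta)|^q
$$
for an integer $q\geq 1$ to be chosen. Using the bias--variance decomposition of the KDE from the proof of Theorem~\ref{thm:uniformResult} combined with the $L_p$ rate in Theorem~\ref{thm:LocalNodeConv}, we have $\bE[\sup_\theta|\mathsf{p}_T^N(\theta)-\pi_T(\theta)|^q]\leq c_q N^{-q(1-\varepsilon_0)/(2(d+1))}$ for any $\varepsilon_0\in(0,\varepsilon)$; selecting $q$ large enough that $q(\varepsilon-\varepsilon_0)/(2(d+1))>1$ then makes the defining series finite. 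Finiteness of $\bE[V_\varepsilon^q]$ implies finiteness of $\bE[V_\varepsilon]$ by Jensen's inequality, and hence of $\bE[\tilde{W}_{T,d,\varepsilon}]$, completing the proof.

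The main obstacle is this integrability step, and in particular verifying the uniform-in-$\theta$ $L_q$ bound with the $N$-dependent bandwidth $h=\lfloor N^{1/(2(d+1))}\rfloor^{-1}$ fixed by Theorem~\ref{thm:uniformResult}; once that moment estimate is in place, the choice of $q$ and $\varepsilon_0$ is routine. As a sanity check, the assumption $\|f\|_\infty<\infty$ already provides the crude envelope $0\le f(\theta_T^{\star,N})-f(\theta^\star)\leq 2\|f\|_\infty$, which together with Corollary~\ref{cor:ConvergenceRateCostFunc} and the dominated convergence theorem gives $\bE[f(\theta_T^{\star,N})]\to f(\theta^\star)$ without any moment analysis; however, this soft route does not deliver a rate, which is exactly why the moment-based argument on $V_\varepsilon$ is indispensable.
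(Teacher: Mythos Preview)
Your starting point---taking expectations of the almost-sure inequality from Corollary~\ref{cor:ConvergenceRateCostFunc} and identifying $C_{T,d,\varepsilon}=\bE[\tilde{W}_{T,d,\varepsilon}]$ as the only thing to check---is exactly what the paper does. The paper's entire proof is a single sentence: ``The proof follows from Corollary~\ref{cor:ConvergenceRateCostFunc}, since $\tilde{W}_{T,d,\varepsilon}$ is an a.s.\ finite random variable.'' In other words, the paper simply asserts integrability from almost-sure finiteness and stops. Your instinct that this is not, by itself, a sufficient justification is correct: almost-sure finiteness does not in general imply finite expectation, so your moment analysis goes well beyond what the paper supplies and is the right thing to do if one wants a fully rigorous argument.

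One omission in your tracing: from the proof of Theorem~\ref{thm:FinalConvergenceRate} (Appendix~\ref{proof:thm:FinalConvergenceRate}) the random constant is
\[
W_{T,d,\varepsilon} \;=\; 2\Bigl[\,V_{T,\varepsilon} + \bigl(U_{T,\delta(\varepsilon)}/c_{T,d}\bigr)^{1/d} L_T\,\Bigr],
\qquad
\tilde{W}_{T,d,\varepsilon} = Z_{\pi_T}\, e^{\|f\|_\infty}\, W_{T,d,\varepsilon},
\]
so $\tilde{W}$ depends on \emph{both} $V_{T,\varepsilon}$ (from Theorem~\ref{thm:uniformResult}) and $U_{T,\delta}$ (from Corollary~\ref{corr:almostSureConv}), not just on $V_\varepsilon$. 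The $U$-part is actually the easy one: the proof of Corollary~\ref{corr:almostSureConv} explicitly establishes $\bE[U_{t,\delta}^{\,p}]<\infty$ via the Fatou argument in \eqref{eq:FatouBound}, and then $\bE[U_{T,\delta}^{1/d}]<\infty$ follows by Jensen. The $V_\varepsilon$-part is, as you say, the genuine obstacle, and your proposed series construction with a sufficiently large $q$ is the natural route; the uniform-in-$\theta$ $L_q$ bound you need is not proved in the present paper but is delegated to \cite{crisan2014particle}, so that is where you would have to look to close the argument.
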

\begin{proof}
The proof follows from Corollary~\ref{cor:ConvergenceRateCostFunc}, since $\tilde{W}_{T,d,\varepsilon}$ is an a.s. finite random variable.
\end{proof}

\subsection{Discussion}


{Theorem \ref{thm:FinalConvergenceRate} and Corollaries \ref{cor:ConvergenceRateCostFunc} and \ref{corr:ErgodicConvergenceRateCostFunc} go beyond standard results on the convergence of SMC methods. While the latter refer to the approximation of integrals (in the vein of Theorem \ref{thm:LocalNodeConv} and Corollary \ref{corr:almostSureConv}), Corollaries \ref{cor:ConvergenceRateCostFunc} and \ref{corr:ErgodicConvergenceRateCostFunc} directly address the convergence of the sequence of optimizers $\theta_T^{\star,N}$ and state that the proposed algorithm yields, with probability 1, an asymptotically optimal solution to problem \eqref{eqCostFunction} even if $f(\theta)$ is non-convex and presents multiple local and/or global minima. These results also provide explicit convergence rates that depend on the computational cost (the number of particles $N$) and the dimension $d$ of the search space.}

Note that the analyses available in the literature for most Monte Carlo optimization algorithms are purely asymptotical (see \citet{appel2004accelerated,ikonen2005application,miguez2010analysis,hu2012survey,zhou2013particle}, i.e., they do not provide explicit convergence rates. Moreover, they often rely on restrictive assumptions. For example, \citet{hu2012survey} and \citet{zhou2013particle} require that the objective function present a unique global minimum. More detailed analyses are carried out by \citet{zhou2013sequential} and \citet{miguez2013convergence}. However, the former falls short of providing explicit error rates for the sequence of optimizers (bounds are given for the total variation distance between the Boltzmann distributions and their SMC approximations in a SA scheme) and the latter relies on a sequential decomposition of the cost function that is not satisfied by $f(\theta)$ in problem \eqref{eqCostFunction}. Moreover, all the analytical results in these papers \citep{appel2004accelerated,ikonen2005application,miguez2010analysis,hu2012survey,zhou2013particle,zhou2013sequential,miguez2013convergence} are obtained for deterministic optimization problems where the objective function can be evaluated exactly, while Theorem \ref{thm:FinalConvergenceRate} and Corollaries \ref{cor:ConvergenceRateCostFunc} and \ref{corr:ErgodicConvergenceRateCostFunc}  hold for a more general stochastic optimization framework where $f(\theta)$ can only be estimated using mini-batches of data.

%
\section{Numerical Results}\label{sec:Exp}

In this section, we show numerical results for {three} optimization problems which are hard to solve with conventional methods. In the first example, we focus on minimizing a function with multiple global minima. The aim of this experiment is to show that, when the cost function has several global minima, the PSMCO algorithm can successfully populate with Monte Carlo samples the regions of $\Theta$ that contain these minima. In the second example, we tackle the minimization of a challenging cost function, with broad flat regions, for which standard stochastic gradient optimizers struggle. The third example involves a non-convex, non-smooth cost function and we use it to compare the proposed PSMCO scheme with a similar SMC-based optimization method proposed in \citet{stinis2012stochastic}.

%
\subsection{Minimization of a function with multiple global minima}\label{sec:Exp1}
\begin{figure*}
\begin{center}
\includegraphics[scale=0.41]{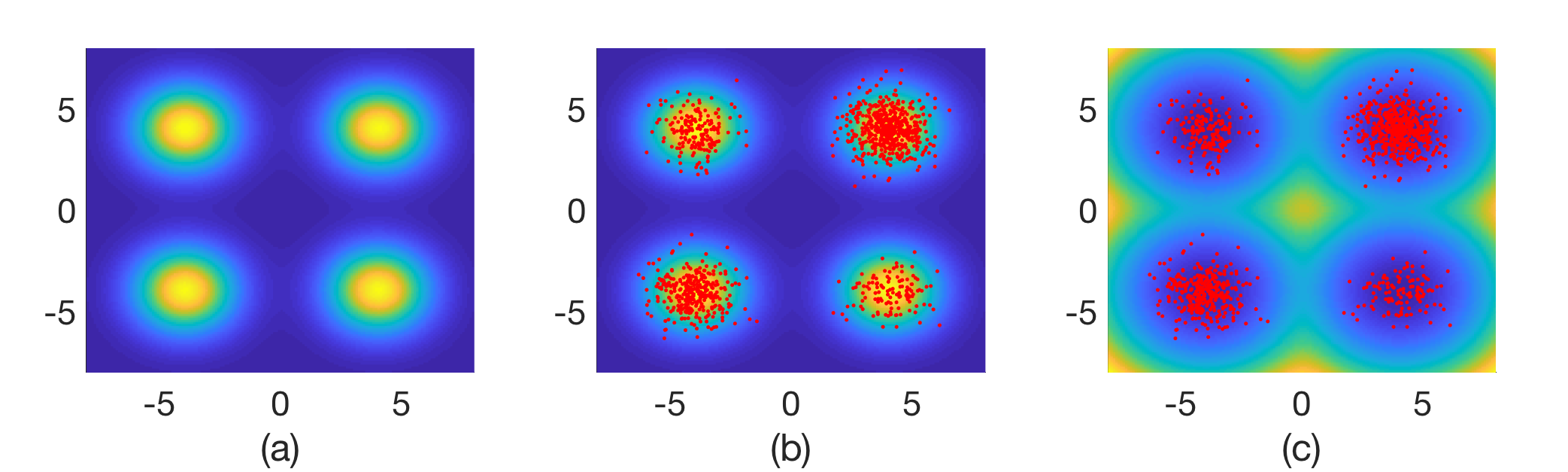}
\caption{An illustration of the performance of the proposed algorithm for a cost function with four global minima. (a) The plot of $\pi_T(\theta) \propto \exp(-f(\theta))$. The blue regions indicate low values. It can be seen that there are four global maxima. (b) Samples drawn by the PSMCO at a single time instant. (c) The plot of the samples together with the actual cost function $f(\theta)$.}
\label{fig:multGlobal}
\end{center}
\end{figure*}
In this experiment, we tackle the problem
\begin{align*}
\min_{\theta\in\bR^2} f(\theta), \textnormal{ where } \quad f (\theta) = \sum_{i=1}^n f_i(\theta)
\end{align*}
and
\begin{align*}
f_i(\theta) = -\frac{1}{\lambda} \log\left(\sum_{k=1}^4 \NPDF(\theta;m_{i,k}, R)\right),
\end{align*}
with $\lambda = 10$ and $R = r I_2$, with $I_2$ denoting the $2 \times 2$ identity matrix and $r = 0.2$. We choose the means $m_{i,k}$ randomly, namely $m_{i,k} \sim \NPDF(m_{i,k}; m_k, \sigma^2)$ where,
\begin{align*}
&m_1 = [4,4]^\top, \quad m_2 = [-4,-4]^\top,\\ &m_3 = [-4,4]^\top, \quad m_4 = [4,-4]^\top,
\end{align*}
and $\sigma^2 = 0.5$. This selection results in a cost function with four global minima. Such functions arise in many machine learning problems, see, e.g., \citet{mei2018landscape}. In this experiment, we have chosen $n = 1,000$. Although a small number for stochastic optimization problems, we note that each $f_i(\theta)$ represents a mini-batch in this scenario and we set $K= 1$ in the PSMCO algorithm. 

In order to run the algorithm, we choose a uniform prior measure $\pi_0(\theta) = \mathcal{U}([-a,a]\times [-a,a])$ with $a = 50$. It follows from Proposition~\ref{prop:RN} that the pdf that matches the cost function $f(\theta)$ can be written as
\begin{align*}
\pi_T(\theta) \propto \exp(-f(\theta)),
\end{align*}
and it has four global maxima. This pdf is displayed in Fig.~\ref{fig:multGlobal}(a). We run $M = 100$ samplers, with $N = 50$ particles each, yielding a total number of  $MN = 5,000$ particles. We choose a Gaussian jittering scheme; specifically, the jittering kernel is defined as
\begin{align}\label{eq:GaussianJitteringKernel}
\kappa(\mbox{d}\theta | \theta') = (1 - \epsilon_N) \delta_{\theta'}(\mbox{d}\theta) + \epsilon_N \NPDF(\theta; \theta',\sigma^2_j)\md\theta,
\end{align}
where $\epsilon_N = 1/\sqrt{N}$ and $\sigma_j^2 = 0.5$.

Some illustrative results can be seen from Fig.~\ref{fig:multGlobal}. To be specific, we have run independent samplers and plot all samples for this experiment (instead of estimating a minimum with the best performing sampler). From Fig.~\ref{fig:multGlobal}(b), it can be seen that the algorithm populates the regions surrounding all maxima with samples. Finally, Fig.~\ref{fig:multGlobal}(c) shows the location of the samples relative to the actual cost function $f(\theta)$. These plots illustrate how the algorithm ``locates'' multiple, distinct global maxima with independent samplers. Note different samplers can converge to different global maxima in practice --which is in agreement with the analysis provided in Section~\ref{sec:Analysis}.

%
\subsection{Minimization of the sigmoid function}

In this experiment, we address the problem,
\begin{align}\label{eq:SigmoidCost}
\min_{\theta\in\bR^2} f(\theta) := \sum_{i=1}^n (y_i - g_i(\theta))^2,
\end{align}
where
\begin{align*}
\quad g_i(\theta)= \frac{1}{1 + \exp(-\theta_1 - \theta_2 x_i)},
\end{align*}
with $x_i\in\bR$, $f_i(\theta) = (y_i - g_i(\theta))^2$ and $\theta = [\theta_1,\theta_2]^\top$. The function $g_i$ is called as the sigmoid function. Cost functions of the form in eq.~\eqref{eq:SigmoidCost} are widely used in nonlinear regression with neural networks in machine learning \citep{Bishop2006}.

In this experiment, we have $n = 100,000$. We choose $M = 25$ and $MN = 1,000$, leading to $N=40$ particles for every sampler. The mini-batch size is $K = 100$. The jittering kernel $\kappa$ is defined in the same way as in \eqref{eq:GaussianJitteringKernel}, where the Gaussian pdf has a variance chosen as the ratio of the dataset size $L$ to the mini-batch size $K$, i.e., $\sigma_j^2 = {n}/{K}$, which yields a rather large variance\footnote{Note that this is for efficient exploration of the global minima, which are hard to find for this example. A large jittering variance may not be adequate in practice when there are multiple minima close to each other, see, e.g., Section~\ref{sec:Exp1}.} $\sigma_j^2 = 1000$. To compute the maximum as described in Eq.~\eqref{eq:maxEstimator}, we use a Gaussian kernel density with bandwidth $h = \lfloor{N^{\frac{1}{6}}}\rfloor^{-1}$.

\begin{figure*}
\begin{center}
\includegraphics[scale=0.4]{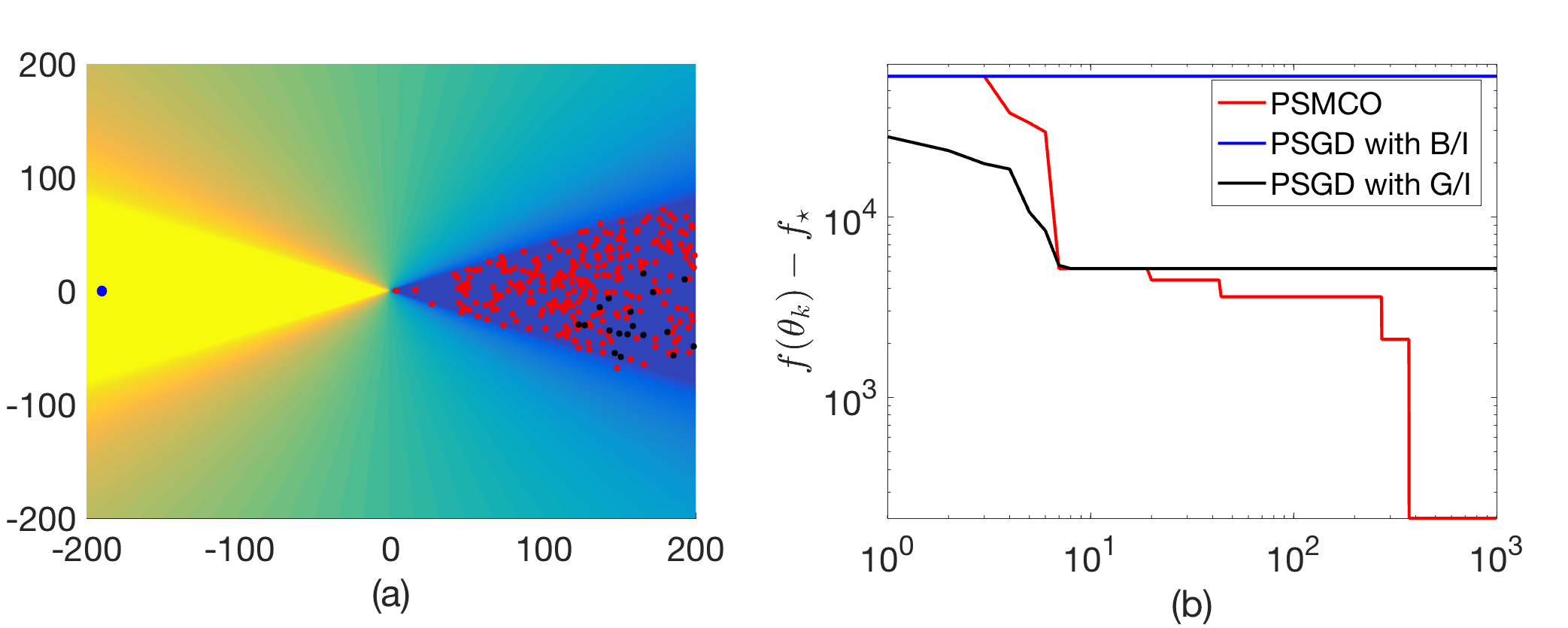}
\caption{(a) The cost function and a snapshot of samples from the 50th iteration of the PSMCO, PSGD with bad initialization (blue dot on the yellow area) and PSGD with good initialization (black dots on the blue area). (b) Performance of each algorithm: it can be seen that PSMCO first converges to the wide region with low values (blue region) and then jumps to the minimum. This is because the marginal likelihood estimate of the sampler close to the minimum dominates after a while. There is effectively full communication among samplers only to determine the minimizer.}
\label{fig:LogFunc}
\end{center}
\end{figure*}

In Fig.~\ref{fig:LogFunc} we compare the PSMCO algorithm with a parallel stochastic gradient descent (PSGD) scheme \citep{zinkevich2010parallelized} using $M$ optimizers. We note that, given a particular realization\footnote{For this experiment, we generate i.i.d. uniform realizations, $x_k \sim \mathcal{U}([-2.5,2.5])$ for $k=1, \ldots, n$.} of $(x_i)_{i=1}^n$, searching for a minimum of $f(\theta)$ may be a hard task. Fig.~\ref{fig:LogFunc}(a)  shows one such case, where the cost function has broad flat regions which make it difficult to find its maxima using gradient based methods unless their initialization is sufficiently good. Accordingly, we have run two instances of PSGD with ``bad'' and ``good'' initializations.

The bad initial point for PSGD can be seen from Fig~\ref{fig:LogFunc}(a), at $[-190,0]^\top$ (the blue dot). We initialize $M$ parallel SGD optimizers around $[-190,0]^\top$, each with a small zero-mean Gaussian perturbation with variance $10^{-8}$. This is a poor initialization because gradients are nearly zero in this region (yellow area in Fig.~\ref{fig:LogFunc}(a)). We refer to the PSGD algorithm starting from this point as PSGD with B/I, which refers to bad initialization. We also initialize the PSMCO from this region, with Gaussian perturbations around $[-190,0]^\top$, with the same small variance $\sigma_\textnormal{init}^2 = 10^{-8}$.

The ``good'' initialization for the PSGD is selected from a better region, namely around the point $[0,-100]^\top$, where gradient values actually contain useful information about the minimum. We refer to the PSGD algorithm starting from this point as PSGD with G/I.

The results can be seen in Fig.~\ref{fig:LogFunc}(b). We observe that the PSGD with good initialization (G/I) moves towards a better region, however, it gets stuck because the gradient becomes nearly zero. On the other hand, PSGD with B/I is unable to move at all, since it is initialized in a region where all gradients are negligible (which is true even for the mini-batch observations). The PSMCO method, on the other hand, searches the space effectively to find the global minimum, as depicted in Fig.~\ref{fig:LogFunc}(b).

{
\subsection{Constrained nonsmooth nonconvex optimization}
In this section, we compare the proposed PSMCO scheme to the method of \citet{stinis2012stochastic}, labeled here as `particle filtering for stochastic global optimization' (PFSGO), and the stochastic evolution strategies (SES) algorithm in \citet{salimans2017evolution} for a high-dimensional non-smooth and non-convex optimization problem. In particular, we apply this algorithms to numerically solve the problem
\begin{align}\label{eq:NonconvexFull}
\min_{\theta\in\Theta} \frac{1}{2} \|y - X^\top \theta\|^2 + \frac{\rho}{2} \sum_{i=1}^d P_{\lambda,\gamma}(\theta_i),
\end{align}
where $y \in \bR^n$, $X \in \bR^{d\times n}$, $\Theta = [-5,5]^d$, the dimension $d$ is set to different values (see below),  and $P_{\lambda,\gamma}:\bR\mapsto\bR$ is given by
\begin{align*}
P_{\lambda,\gamma}(x) = \left\{
\begin{aligned}
&\lambda|x|  && ~\text{if} ~|x|\leq \lambda,\\
&\tfrac{2\gamma\lambda|x|-x^2-\lambda^2}{2(\gamma-1)}  && ~\text{if}~ \lambda<|x|<\gamma \lambda,\\
&\tfrac{\lambda^2(\gamma+1)}{2}  && ~\text{if} ~|x|\geq\gamma\lambda,
\end{aligned}
\right.
\end{align*}
where $\lambda> 2$ and $\gamma > 0$. This problem formulation is useful for variable selection, see, e.g., \citet{fan2001variable} or \citet{lan2019accelerated}. It is easy to see that problem \eqref{eq:NonconvexFull} can be written as
\begin{align}\label{eq:NonconvexFull2}
\min_{\theta\in\Theta} \frac{1}{2} \sum_{i=1}^n (y_i - x_i^\top \theta)^2 + \frac{\rho}{2} \sum_{i=1}^d P_{\lambda,\gamma}(\theta_i),
\end{align}
where $y_i \in \bR$, and $x_i \in \bR^d$. This, in turn, makes the problem an instance of \eqref{eqCostFunction}, with $$f_i(\theta) = \frac{1}{2} (y_i - x_i^\top \theta)^2 + \frac{\tilde{\rho}}{2} \sum_{i=1}^d P_{\lambda,\gamma}(\theta_i),$$
and $\tilde{\rho} = \rho/n$.
} 

In this problem, we also test the single-worker version of the proposed optimization scheme. We refer to this algorithm simply as SMCO and it is obtained as the particular case of PSMCO with $M=1$. We use the usual jittering kernel of the form \eqref{eq:kernelDefn}
\begin{align*}
\kappa(\mbox{d}\theta | \theta') = (1 - \epsilon_N) \delta_{\theta'}(\mbox{d}\theta) + \epsilon_N \tau(\mbox{d}\theta | \theta'),
\end{align*}
where $\tau$ is a Gaussian kernel with covariance $C = \sigma^2 I_d$ for both methods. We also use the same Gaussian transition kernel for the PFSGO. Let us remark, though, that (unlike SMCO and PSMCO) the PFSGO scheme modifies all particles at every iteration, i.e., it uses $\tau(\cdot | \theta')$ instead of $\kappa(\cdot|\theta')$ for sampling. The SES scheme also uses $\tau$ in order to estimate the gradients.

We choose $\sigma^2 = 10^{-2}$ and $N = 100$. The mini-batch size is taken as $K = 1$ and the number of components is $n = 1,000$. For the PSMCO, we chose $M = 5$, so it essentially runs $5$ samplers with $20$ particles each while the SMCO scheme runs a single sampler with $N=100$ particles. For the regularization parameters, we choose $\tilde{\rho} = 1, \lambda = 10^{-3}$, and $\gamma = 2.01$. For the SES, we choose a small step size of $\alpha = 10^{-7}$ as larger values cause it to diverge. We simulate the data using a sparse parameter $\theta^\star$, where only three values are nonzero. We simulate the entries of the matrix $X$ as i.i.d. variates from $\NPDF(0,1)$ and compute $y = X^\top \theta^\star$. In order to compute the error for an iterate $\theta_k$ produced by any method, we compute
\begin{align*}
\mathsf{NMSE}(k) = \frac{\|\theta_k - \theta^\star\|^2}{\|\theta^\star\|^2}.
\end{align*}

The results can be seen in Fig.~\ref{fig:NonSmoothOpt}. We also plot the 0.5$\sigma$ curves around the error curves which are averaged over $1,000$ Monte Carlo runs. It can be seen that, for this particular example, the SMCO performs the best, while the PSMCO still outperforms the PFSGO. The SES basically is very slow due to the inefficiency of the gradient estimators for this problem.

\begin{figure}
\begin{center}
\includegraphics[width=0.95\linewidth]{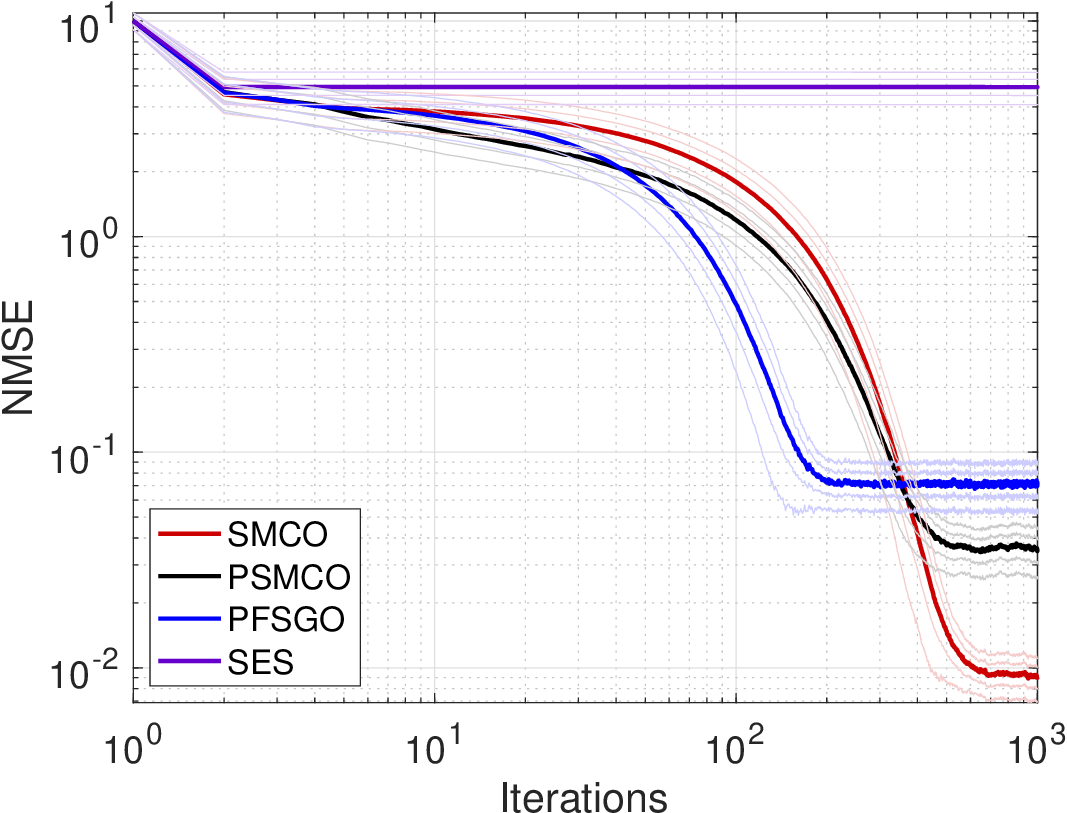}
\caption{Comparison of algorithms for problem \eqref{eq:NonconvexFull2} with $d = 10$, $N = 100$, and $n = 1,000$. It can be seen that the SMCO is the most efficient method for this problem and the PSMCO ($M = 5$) is the second best. Although PFSGO converges faster, the steady error that it attains is higher. The results are averaged over $1,000$ Monte Carlo runs.}
\label{fig:NonSmoothOpt}
\end{center}
\end{figure}

\begin{figure}
\begin{center}
\includegraphics[width=0.95\linewidth]{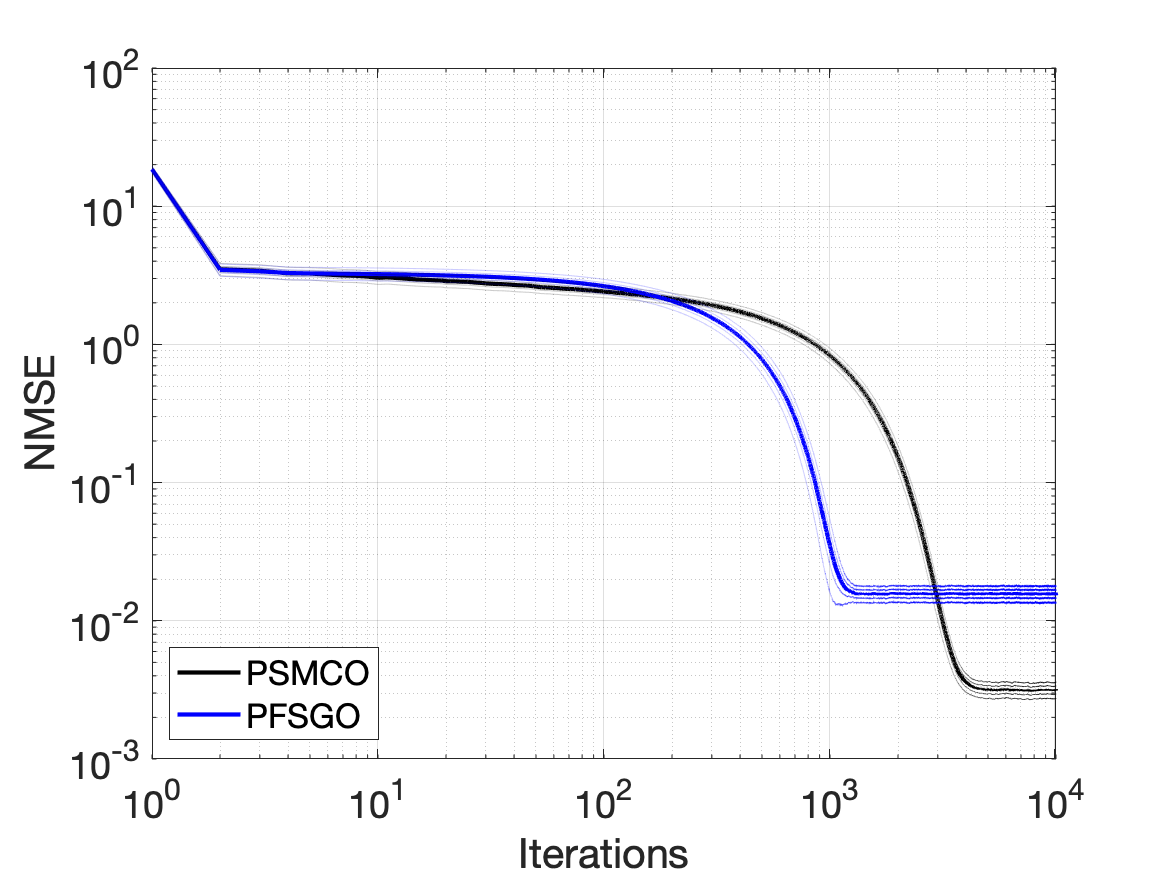}
\caption{Comparison of algorithms for problem \eqref{eq:NonconvexFull2}, with $d = 30$, $N = 1,000$, and $n = 10,000$ --only for the PSMCO ($M = 25$) and PFSGO schemes. The results are averaged over $1,000$ Monte Carlo runs.}
\label{fig:NonSmoothOpt2}
\end{center}
\end{figure}

To gain further insight, we also compare PSMCO ($M = 25$) and the PFSGO on a problem that is higher-dimensional, namely $d = 30$, and with more data points, $n = 10,000$. We set $\sigma^2 = 10^{-3}$ and leaving other parameters same as in the example with $d  = 10$.

Figure \ref{fig:NonSmoothOpt2} displays the results for this example. It can be seen that again the PSMCO algorithm converges to a point which has lower NMSE than the PFSGO. We believe that this is mainly due to the difference in the transition kernels. The PFSGO uses a full transition kernel where every particle is modified whereas jittering enables us to induce slower and more careful changes and also gives us a chance to keep a particle unmodified if it is in a good location.

\section{Conclusions}\label{sec:Conc}
We have proposed a parallel sequential Monte Carlo optimization algorithm which does not require the computation (either exact or approximate) of gradients and, therefore, can be applied to the minimization of challenging cost functions, e.g., with multiple global minima or with broad ``flat'' regions. The proposed method uses jittering kernels to propagate samples \citep{crisan2018nested} and particle kernel density estimators to find the minima \citep{crisan2014particle}, within a stochastic optimization setup. We have provided a detailed analysis of the proposed scheme. In particular, we have proved that it yields asymptotically optimal solutions to the stochastic optimization problem \eqref{eqCostFunction} (as the number of samples $N$ is increased) and we have computed explicit convergence rates for the resulting optimizers that depend on $N$ and the dimension of the search space, $d$. These results are new and improve on classical asymptotic analyses for Monte Carlo optimization methods, which typically lack convergence rates.

From a practical perspective, we argue that the parallel setting where each sampler uses a different configuration of the same dataset can be useful to improve the practical behaviour of the algorithm. To illustrate this  point, we have studied the numerical performance of the PSMCO algorithm in scenarios where gradient-based methods struggle to converge. In this work, we have focused on challenging but relatively low-dimensional cost functions. We leave the potential applications of our scheme to high-dimensional optimization problems as a future work. Also the design of an interacting extension of our method similar to particle islands \citep{verge2015parallel} can be potentially useful in more challenging settings.

%




%
\appendix
\section{Appendix}

%
\subsection{Proof of Proposition~\ref{prop:RN}}\label{proof:RN}

We prove this result by induction. For $t = 1$, let
\begin{align*}
\pi_1(\md\theta) = \pi_0(\md\theta) \frac{G_1(\theta)}{\int_\Theta G_1(\theta) \pi_0(\md\theta)} = \pi_0(\md \theta) \frac{G_1(\theta)}{(G_1,\pi_0)}.
\end{align*}
Since $G_1 \in B(\Theta)$ it follows that
\begin{align*}
\sup_{\theta\in\Theta} \left| \frac{G_1(\theta)}{(G_1,\pi_0)}\right| = \frac{\sup_{\theta\in\Theta} G_1(\theta)}{(G_1,\pi_0)} < \infty
\end{align*}
because of Assumption~\ref{assmp:integrability}. Hence $\pi_1 \ll \pi_0$ is a proper measure. Assume next, as an induction hypothesis, that $\pi_{T-1} \ll \pi_0$. Then
\begin{align*}
\pi_T(\md\theta) = \pi_{T-1}(\md\theta) \frac{G_T(\theta)}{(G_T,\pi_{T-1})}
\end{align*}
and Assumption~\ref{assmp:integrability} implies (again) that
\begin{align*}
\frac{\sup_{\theta\in\Theta} G_T(\theta)}{(G_T,\pi_{T-1})} < \infty,
\end{align*}
hence $\pi_T$ is proper and $\pi_T \ll \pi_0$. Therefore, the Radon-Nikodym derivative of the final measure $\pi_T$ w.r.t. the prior $\pi_0$ is
\begin{align*}
\frac{\md\pi_T}{\md\pi_0}(\theta) \propto \prod_{t=1}^T G_t(\theta) = \exp\left(-\sum_{i=1}^n f_i(\theta)\right).
\end{align*}
From here, it readily follows that maximizing this Radon-Nikodym derivative is equivalent to solving problem \eqref{eqCostFunction}.
\hfill$\blacksquare$

%
\subsection{Proof of Theorem~\ref{thm:LocalNodeConv}}\label{proof:LocalNodeConv}

We proceed by an induction argument. At time $t = 0$, the bound
\begin{align*}
\|(\varphi,\pi_0^N) - (\varphi,\pi_0) \|_p \leq \frac{c_{0,p} \|\varphi\|_\infty}{\sqrt{N}}
\end{align*}
is a straightforward consequence of the Marcinkiewicz–Zygmund inequality \citep{shiryaev1996} because the particles $\{\theta_0^{(i)}\}_{i=1}^N$ are i.i.d samples from $\pi_0$.

Assume now that, after iteration $t-1$, we have a particle set $\{{\theta}_{t-1}^{(i)}\}_{i=1}^N$ and the empirical measure $\pi^N_{t-1}(\mbox{d}\theta_{t-1}) = \frac{1}{N} \sum_{i=1}^N \delta_{\theta_{t-1}^{(i)}} (\mbox{d}\theta_{t-1})$, which satisfies
\begin{align}\label{eq:InductionHypothesis}
\left\|(\varphi,\pi_{t-1}) - (\varphi,\pi_{t-1}^N)\right\|_p \leq \frac{c_{t-1,p} \|\varphi\|_\infty}{\sqrt{N}}.
\end{align}

We first analyze the error in the jittering step. To this end, we construct the \textit{jittered} random measure
\begin{align*}
\hat{\pi}^N_t(\md \theta) = \frac{1}{N} \sum_{i=1}^N \delta_{\hat{\theta}_t^{(i)}}(\md \theta)
\end{align*}
and iterate the triangle inequality to obtain
\begin{align}
\|(\varphi,\pi_{t-1}) - (\varphi,\hat{\pi}_t^N)\|_p \leq&  \|(\varphi,\pi_{t-1}) - (\varphi,\pi^N_{t-1})\|_p \nonumber \\&+  \|(\varphi,\pi_{t-1}^N) - (\varphi,\kappa {\pi}_{t-1}^N)\|_p \nonumber \\ &+ \|(\varphi,\kappa \pi_{t-1}^N) - (\varphi,\hat{\pi}_t^N)\|_p,\label{eq:Decomp1}
\end{align}
where
\begin{align*}
\kappa \pi_{t-1}^N = \int \kappa(\mbox{d}\theta|\theta_{t-1}) \pi_{t-1}^N(\mbox{d}\theta_{t-1}) = \frac{1}{N} \sum_{i=1}^N \kappa(\mbox{d}\theta|{\theta_{t-1}^{(i)}}).
\end{align*}
The first term on the right hand side (rhs) of \eqref{eq:Decomp1} is bounded by the induction hypothesis \eqref{eq:InductionHypothesis}. For the second term, we note that,
\begin{align}
\left|(\varphi,\pi_{t-1}^N) - (\varphi,\kappa {\pi}_{t-1}^N)\right| &= \left| \frac{1}{N} \sum_{i=1}^N \varphi(\theta_{t-1}^{(i)}) - \frac{1}{N} \sum_{i=1}^N \int \varphi(\theta) \kappa(\mbox{d}\theta|{\theta_{t-1}^{(i)}}) \right|\nonumber \\
&= \left| \frac{1}{N} \sum_{i=1}^N \int \left(\varphi(\theta_{t-1}^{(i)}) - \varphi(\theta)\right) \kappa(\mbox{d}\theta|{\theta_{t-1}^{(i)}}) \right|\nonumber \\
&\leq \frac{1}{N} \sum_{i=1}^N \int \left| \varphi(\theta_{t-1}^{(i)}) - \varphi(\theta) \right|  \kappa(\mbox{d}\theta|{\theta_{t-1}^{(i)}})\nonumber \\
&\leq \frac{c_\kappa \|\varphi\|_\infty}{\sqrt{N}},\label{eq:ProofSecondTermBound}
\end{align}
where the last inequality follows from Assumption~\ref{assmp:kernel}. The upper bound in \eqref{eq:ProofSecondTermBound} is deterministic, so the inequality readily implies that
\begin{align}\label{eq:bound2}
\|(\varphi,\pi_{t-1}^N) - (\varphi,\kappa\pi_{t-1}^N)\|_p \leq \frac{c_\kappa \|\varphi\|_\infty}{\sqrt{N}}.
\end{align}

For the last term on the right-hand side of \eqref{eq:Decomp1}, we let $\cF_{t-1}$ be the $\sigma$-algebra generated by the random sequence $\{\theta_{0:t-1}^{(i)},\hat{\theta}_{1:t-1}^{(i)}\}_{i=1}^N$. Let us first note that
\begin{align*}
\bE\left[(\varphi,\hat{\pi}_t) | \cF_{t-1}\right] &= \frac{1}{N} \sum_{i=1}^N \bE\left[\varphi(\hat{\theta}_t^{(i)}) | \cF_{t-1} \right]\\
&= \frac{1}{N}\sum_{i=1}^N \int \varphi(\theta) \kappa(\md\theta|\theta_{t-1}^{(i)}) = (\varphi,\kappa\pi_{t-1}^N).
\end{align*}
Therefore, the difference $(\varphi,\hat{\pi}_t^N) -(\varphi,\kappa \pi_{t-1}^N)$ takes the form
\begin{align*}
(\varphi,\hat{\pi}_t^N) - (\varphi,\kappa\pi_{t-1}^N) = \frac{1}{N} \sum_{i=1}^N S^{(i)},
\end{align*}
where $S^{(i)} = \varphi(\hat{\theta}_t^{(i)}) - \bE[\varphi(\hat{\theta}_t^{(i)})|\cF_{t-1}]$, $i = 1,\ldots,N$, are zero-mean and conditionally independent random variables, with $|S^{(i)}| \leq 2 \|\varphi\|_\infty$. Then we readily obtain the bound
\begin{align}
\bE\left[\left.\left|(\varphi,\hat{\pi}_t^N) - (\varphi,\kappa\pi_{t-1}^N)\right|^p \right| \cF_{t-1}\right] &= \frac{1}{N^p} \bE\left[ \left.\left|\sum_{i=1}^N S^{(i)}\right|^p\right|\cF_{t-1}\right] \nonumber \\
&\leq \frac{B_{t,p} N^{\frac{p}{2}}\|\varphi\|_\infty^p}{N^p}.\label{eq:MZineq}
\end{align}
where the relation \eqref{eq:MZineq} follows from the Marcinkiewicz–Zygmund inequality
\citep{shiryaev1996} and $B_{t,p} < \infty$ is some constant independent of $N$. Taking unconditional expectations on both sides of \eqref{eq:MZineq} and then computing $(\cdot)^\frac{1}{p}$ yields
\begin{align}
\|(\varphi,\hat{\pi}_t^N) - (\varphi,\kappa\pi_{t-1}^N)\|_p \leq \frac{\hat{c}_{t,p} \|\varphi\|_\infty}{\sqrt{N}}.\label{eq:JitterFinalTermBd}
\end{align}
where $\hat{c}_{t,p} = B_{t,p}^{\frac{1}{p}}$ is a finite constant independent of $N$.
Therefore, taking together \eqref{eq:InductionHypothesis}, \eqref{eq:bound2} and \eqref{eq:JitterFinalTermBd} we have established that
\begin{align}\label{proof:JitteringBound}
\|(\varphi,\pi_{t-1}) - (\varphi,\hat{\pi}_t^N)\|_p \leq& \frac{c_{1,t,p} \|\varphi\|_\infty}{\sqrt{N}},
\end{align}
where $c_{1,t,p} = c_{t-1,p} + c_\kappa + \hat{c}_{t,p} < \infty$ is a finite constant independent of $N$.

Next, we have to bound the error after the weighting step. We recall that
\begin{align*}
\pi_t(\mbox{d}\theta) = \pi_{t-1}(\mbox{d}\theta) \frac{G_t(\theta)}{(G_t,\pi_{t-1})}
\end{align*}
and define
\begin{align*}
\tilde{\pi}_t^N(\mbox{d}\theta) = \hat{\pi}^N_{t}(\mbox{d}\theta) \frac{G_t(\theta)}{(G_t,\hat{\pi}^N_{t})}
\end{align*}
where $\tilde{\pi}_t^N$ denotes the \textit{weighted} measure. We first note that
\begin{align}
&|(\varphi,\pi_t) - (\varphi,\tilde{\pi}_t^N)| = \left| \frac{(\varphi G_t, \pi_{t-1})}{(G_t,\pi_{t-1})} - \frac{(\varphi G_t, \hat{\pi}_t^N)}{(G_t,\hat{\pi}_t^N)} \pm \frac{(\varphi G_t, \hat{\pi}^N_t)}{(G_t,\pi_{t-1})} \right|\nonumber \\
&\leq \frac{\left|(\varphi G_t, \pi_{t-1}) - (\varphi G_t, \hat{\pi}_t^N)\right| + \|\varphi\|_\infty |(G_t,\hat{\pi}_t^N) - (G_t,\pi_{t-1})|}{(G_t,\pi_{t-1})}.\label{eq:finalIneq}
\end{align}
Using Minkowski's inequality together with \eqref{proof:JitteringBound} and \eqref{eq:finalIneq} yields
\begin{align*}
\|(\varphi,\pi_t) - (\varphi,\tilde{\pi}_t^N)\|_p &\leq \frac{c_{1,t,p} \|\varphi G_t\|_\infty + c_{1,t,p} \|\varphi\|_\infty \|G_t\|_\infty}{(G_t,\pi_{t-1})\sqrt{N}}, \\
&\leq \frac{2 c_{1,t,p} \|\varphi\|_\infty \|G_t\|_\infty}{(G_t,\pi_{t-1})\sqrt{N}}
\end{align*}
where the second inequality follows from $\|\varphi G_t \|_\infty \leq \|\varphi\|_\infty \|G_t\|_\infty$. More concisely, we have
\begin{align}\label{proof:WeightingBound}
\|(\varphi,\pi_t) - (\varphi,\tilde{\pi}_t^N)\|_p \leq \frac{c_{2,t,p} \|\varphi\|_\infty}{\sqrt{N}}
\end{align}
where the constant
\begin{align*}
c_{2,t,p} = \frac{2 c_{1,t,p} \|G_t\|_\infty}{(G_t,\pi_{t-1})} < \infty
\end{align*}
is independent of $N$. Note that the assumptions on $(G_t)_{t\geq 1}$ imply that $(G_t,\pi_{t-1}) > 0$.

Finally, we bound the resampling step. Note that the resampling step consists of drawing $N$ i.i.d samples from $\tilde{\pi}_t^N$, i.e. $\theta_t^{(i)} \sim \tilde{\pi}_t^N$ i.i.d for $i = 1, \ldots,N$, and then constructing
\begin{align*}
\pi^N_t(\mbox{d}\theta) = \frac{1}{N} \sum_{i=1}^N \delta_{\theta_t^{(i)}}(\mbox{d}\theta).
\end{align*}
Since samples are i.i.d, as in the base case, we have,
\begin{align}\label{proof:ResamplingBound}
\|(\varphi,\tilde{\pi}_t^N) - (\varphi,\pi_t^N)\|_p \leq \frac{\tilde{c}_p \|\varphi\|_\infty}{\sqrt{N}},
\end{align}
for some constant $\tilde{c}_p < \infty$ independent of $N$. Now combining \eqref{proof:WeightingBound} and \eqref{proof:ResamplingBound}, we have the desired result,
\begin{align*}
\|(\varphi,\pi_t) - (\varphi,\pi_t^N)\|_p \leq \frac{c_t \|\varphi\|_\infty}{\sqrt{N}}
\end{align*}
where $c_t = c_{2,t,p} + \tilde{c}_p$ is a finite constant independent of $N$.
\hfill $\blacksquare$

%
\subsection{Proof of Corollary~\ref{corr:almostSureConv}}\label{proof:corr:almostSureConv} 

From Theorem~\ref{thm:LocalNodeConv}, we obtain
\begin{align*}
\|(\varphi,\pi_t) - (\varphi,\pi_t^N)\|_p \leq \frac{c_t \|\varphi\|_\infty}{\sqrt{N}},
\end{align*}
where $c_t < \infty$ is a constant independent of $N$. Let us choose $p\geq 4$ and $0 < \epsilon < 1$. We construct the nonnegative random variable
\begin{align*}
U_{t,\epsilon}^{p} = \sum_{N=1}^\infty N^{\frac{p}{2} - 1 - \epsilon} |(\varphi,\pi_t) - (\varphi,\pi_t^N)|^p.
\end{align*}
and use Fatou's lemma to obtain
\begin{align}
\bE[U_{t,\epsilon}^{p}] &\leq \sum_{N=1}^\infty N^{\frac{p}{2} - 1 - \epsilon} \bE\left[\left|(\varphi,\pi_t) - (\varphi,\pi^N_t)\right|^p \right],\nonumber \\
&\leq c^p \|\varphi\|_\infty^p \sum_{N=1}^\infty N^{- 1 - \epsilon} < \infty, \label{eq:FatouBound}
\end{align}
where the second inequality follows from Theorem~\ref{thm:LocalNodeConv}. The relationship \eqref{eq:FatouBound} implies that the r.v. $U^p_{t,\epsilon}$ is a.s. finite.

Finally, since (trivially) $N^{\frac{p}{2} - 1 - \epsilon} |(\varphi,\pi_t) - (\varphi,\pi_t^N)|^p \leq U_{t,\epsilon}^{p}$, we have
\begin{align}\label{eq:randomErrorRate}
|(\varphi,\pi_t) - (\varphi,\pi_t^N)| \leq \frac{U_{t,\delta}}{N^{\frac{1}{2} - \delta}},
\end{align}
where $\delta = \frac{1 + \epsilon}{p}$ and $U_{t,\delta} = (U_{t,\epsilon}^{p})^{\frac{1}{p}}$. Since $p\geq 4$ and $0 < \epsilon < 1$, it follows that $0 < \delta < \frac{1}{2}$. The almost sure convergence follows from \eqref{eq:randomErrorRate}. Taking $N\to\infty$ yields
\begin{align*}
\lim_{N\to\infty} |(\varphi,\pi_t) - (\varphi,\pi_t^N)| = 0 \quad \quad \textnormal{a.s.}
\end{align*}
\hfill $\blacksquare$

%
\subsection{Proof of Proposition~\ref{prop:Lipschitz}}\label{proof:Lipschitz}
Recall the assumption
\begin{align*}
|F_t(\theta) - F_t(\theta')| \leq \ell_t \|\theta -\theta'\|.
\end{align*}
We write $F_t^\star = \min_{\theta\in\Theta} F_t(\theta)$, which is assumed to be finite, but not necessarily nonnegative. We first prove that $\exp(-F_t(\theta))$ is also Lipschitz continuous. Note that we trivially have $\exp(-F_t(\theta)) \leq \exp(-F^\star_t)$ for all $\theta$ since $F_t(\theta) \geq F_t^\star$ for all $\theta$. Now consider any $(\theta,\theta') \in \Theta\times\Theta$. We first consider the case where $F_t(\theta) \leq F_t(\theta')$. We obtain
\begin{align}
0 < e^{-F_t(\theta)} - e^{-F_t(\theta')} &= e^{-F_t(\theta)} \left( 1 - e^{F_t(\theta) - F_t(\theta')}\right),\nonumber \\
&\leq e^{-F_t(\theta)} \left(1 - (1 + F_t(\theta) - F_t(\theta'))\right), \label{proof:prop2_1}
\end{align}
where we have used the inequality $e^a \geq 1 + a$. Therefore, we readily obtain from \eqref{proof:prop2_1}
\begin{align}
0< e^{-F_t(\theta)} - e^{-F_t(\theta')} &\leq e^{-F_t(\theta)} \left(F_t(\theta') - F_t(\theta)\right),\nonumber \\
&\leq e^{-F_t^\star} \left(F_t(\theta') - F_t(\theta)\right) \\
&= e^{-F_t^\star} |F_t(\theta') - F_t(\theta)|,\label{proof:prop2_2}
\end{align}
since $F_t(\theta) \leq F_t(\theta')$. Next, assume otherwise, i.e., $F_t(\theta) \geq F_t(\theta')$. In this case, we can also show using the same line of reasoning that
\begin{align}\label{proof:prop2_3}
e^{-F_t(\theta')} - e^{-F_t(\theta)} &\leq e^{-F_t^\star} \left(F_t(\theta) - F_t(\theta')\right) \\
&= e^{-F_t^\star} |F_t(\theta') - F_t(\theta)|,
\end{align}
since $F_t(\theta) \geq F_t(\theta')$. Therefore, we can conclude (combining \eqref{proof:prop2_2} and \eqref{proof:prop2_3}) that
\begin{align*}
|e^{-F_t(\theta)} - e^{-F_t(\theta')}| \leq e^{-F_t^\star} |F_t(\theta') - F_t(\theta)| \leq e^{-F_t^\star} \ell_t \|\theta - \theta'\|,
\end{align*}
where the last inequality holds because $F_t$ is Lipschitz. Finally recall that
\begin{align*}
\pi_t(\theta) = \frac{e^{-F_t(\theta)}}{Z_{\pi_t}},
\end{align*}
where we denote $Z_{\pi_t} = \int_\Theta e^{-F_t(\theta)}\md\theta$. We straightforwardly obtain
\begin{align*}
|\pi_t(\theta) - \pi_t(\theta')| \leq \frac{1}{Z_{\pi_t}} e^{-F_t^\star} \ell_t \|\theta - \theta'\|.
\end{align*}
\hfill $\blacksquare$

%
\subsection{Proof of Theorem~\ref{thm:uniformResult}}\label{proof:uniformResult}

Using the proof of Theorem~4.2 and Corollary~4.1 in \citet{crisan2014particle}, we obtain
\begin{align*}
\sup_{\theta\in\Theta} | \sp_t^N(\theta) - \pi_t(\theta)| \leq \frac{V_{1,\varepsilon}}{\left\lfloor{N^{\frac{1}{2 (d + 1)}}}\right\rfloor^{1-\varepsilon}},
\end{align*}
where $V_{1,\varepsilon}$ is an a.s. finite random variable. Noting that
\begin{align*}
\sup_{a \geq 1} \frac{a}{\left\lfloor a \right\rfloor} = 2,
\end{align*}
we obtain
\begin{align*}
\sup_{\theta\in\Theta} | \sp_t^N(\theta) - \pi_t(\theta)| \leq \frac{V_\varepsilon}{N^{\frac{1-\varepsilon}{2 (d + 1)}}},
\end{align*}
where $V_\varepsilon = 2 V_{1,\varepsilon}$ is an almost surely finite random variable. $\blacksquare$

%
\subsection{Proof of Theorem~\ref{thm:FinalConvergenceRate}}\label{proof:thm:FinalConvergenceRate}

Recall that $\pi_t(\theta)$ is a probability density w.r.t. the Lebesgue measure. Choose $\theta_t^\star \in \arg\max_{\theta\in\Theta} \pi_t(\theta)$ and construct the ball 
$$
B_{t,n}^\star := B\left( \theta_t^\star, \frac{1}{n} \right) \subset \Theta
$$
where $n \ge 1$ is a positive integer. We assume, without loss of generality, that $B_{t,1}^\star \subseteq \Theta$ and denote
$$
\pi_t(B_{t,n}^\star) = \int_{B_{t,n}^\star} \pi_t(\theta) \md \theta 
\quad \mbox{and} \quad
\pi_t^N(B_{t,n}^\star) = \int_{B_{t,n}^\star} \pi_t^N(\md \theta).
$$
Also recall that the grid of points generated by the SMC sampler at time $t$ is $\{ \theta_t^{(i)} \}_{1 \le i \le N} \subset \Theta$ and the estimate of $\theta_t^\star$ obtained from the grid is denoted
\begin{equation}
\theta_t^{\star,N} \in \arg\max_{\theta \in \{ \theta_t^{(i)} \}_{1 \le i \le N} } {\sf p}_t^N(\theta),
\label{eqA6--1}
\end{equation}
where ${\sf p}_t^N(\theta)$ is the kernel density estimator of $\pi_t$. Our argument to prove Theorem \ref{thm:FinalConvergenceRate} proceeds in two steps:
\begin{enumerate}
\item We show that, for any given $n \ge 1$, one can a.s. find $N$ sufficiently large to ensure that $\{ \theta_t^{(i)} \}_{1 \le i \le N} \cap B_{t,n}^\star \ne \emptyset$, i.e., that there are points of the grid contained in the ball $B_{t,n}^\star$. Moreover, we deduce an inequality that relates the radius $n^{-1}$ of the ball $B_{t,n}^\star$ with the number of necessary particles $N$.
  
\item From the existence of at least one particle $\theta_t^{(i)}$ inside $B_{t,n}^\star$ and the assumption that $\pi_t(\theta)$ is Lipschitz, we deduce bounds for the differences $|\pi_t(\theta_t^\star)-\pi_t(\theta_t^{(i)})|$ and $|\pi_t(\theta_t^{\star,N})-\pi_t(\theta_t^{(i)})|$, and, as a consequence, for the approximation error $|\pi_t(\theta_t^{\star,N})-\pi_t(\theta_t^\star)|$.
\end{enumerate}

\subsubsection{The ball $B_{t,n}^\star$ is a.s. non-empty}

Since $\pi_t(\theta)$ is assumed continuous at every $\theta_t^\star\in\arg\max_{\theta\in\Theta} \pi_t(\theta)$, we have $\pi_t(B_{t,n}^\star)>0$. Therefore, for every $n<\infty$, Theorem \ref{thm:uniformResult} ensures that there exists $N_n$ (a.s. finite) such that for all $N \ge N_n$,
\begin{equation}
\left|
	\pi_t^N(B_{t,n}^\star) - \pi_t(B_{t,n}^\star)
\right| < \frac{
	U_{t,\delta}
}{
	N^{\frac{1}{2}-\delta}
} < \frac{ \pi_t(B_{t,n}^\star)}{2},
\label{eqA6-0}
\end{equation}
where $U_{t,\delta}$ is an a.s. finite random variable and $\delta \in (0,\frac{1}{2})$ is an arbitrarily small constant (both independent of $N$). Moreover, the second inequality in \eqref{eqA6-0} implies that
\begin{equation}
\pi_t^N(B_{t,n}^\star) > \frac{\pi_t(B_{t,n}^\star)}{2} > 0.
\label{eqA6-1}
\end{equation}
Therefore, for all $N>N_n$ there exists at least one integer $i_b \in \{1, \ldots, N\}$ such that $\theta_t^{(i_b)} \in B_{t,n}^\star$.

To be specific, since $\pi_t(\theta)$ is a density w.r.t. the Lebesgue measure, we can readily find a lower bound for the integral $\pi_t(B_{t,n}^\star)$, namely
\begin{equation}
\frac{\pi_t(B_{t,n}^\star)}{2} > \frac{1}{2} \text{Leb}\left( B_{t,n}^\star \right) \times \inf_{\theta \in B_{t,n}^\star} \pi_t(\theta) > c_{t,d} n^{-d}
\nonumber 
\end{equation}
where $\text{Leb}(B_{t,n}^\star) = \frac{\pi^{\frac{d}{2}}}{\Gamma\left( \frac{d}{2}+1 \right) n^d}$ is the Lebesgue measure of the $d$-dimensional ball with radius $n^{-1}$, $\Gamma(\cdot)$ is Euler's gamma function and 
$$
c_{t,d} := \frac{\pi^{\frac{d}{2}}}{2 \Gamma\left( \frac{d}{2}+1 \right) n^d} \times \inf_{\theta \in B_{t,1}^\star} \pi_t(\theta) > 0.
$$ 
Therefore, for any given $n<\infty$, if we choose $N$ such that $0 < \frac{
	U_{t,\delta}
}{
	N^{\frac{1}{2}-\delta}
} < c_{t,d}n^{-d}
$, i.e.,
\begin{equation}
N \ge N_n := \left(
	\frac{
		U_{t,\delta}
	}{
		c_{t,d}
	}
\right)^{\frac{2}{1-2\delta}} n^{\frac{2d}{1-2\delta}}
\label{eqA6-2}
\end{equation}
then the inequalities \eqref{eqA6-0} and \eqref{eqA6-1} hold a.s. (note that $N_n<\infty$ a.s. because $n<\infty$ and $U_{t,\delta}<\infty$ a.s.).

\subsubsection{Error bounds}

Choose $i_b \in \{1, \ldots, N\}$ such that $\theta_t^{(i_b)} \in B_{t,n}^\star$. Such index exists a.s. whenever $N$ satisfies the inequality \eqref{eqA6-2}. Let us recall the construction of the estimate $\theta_t^{\star,N}$ from expression \eqref{eqA6--1} and denote
\begin{equation}
\hat \theta_t^{\star,N} \in \arg\max_{\theta\in\Theta} {\sf p}_t^N(\theta).
\nonumber
\end{equation}

Let $L_t<\infty$ be the Lipschitz constant of the pdf $\pi_t(\theta)$. Since $\theta_t^{(i_b)} \in B_{t,n}^\star$, we readily obtain the upper bound
\begin{equation}
\pi_t(\theta_t^\star) - \pi_t(\theta_t^{(i_b)}) < L_t n^{-1}
\nonumber
\end{equation}
and, therefore,
\begin{equation}
\pi_t(\theta_t^\star) - L_t n^{-1} < \pi_t(\theta_t^{(i_b)}). 
\label{eqA6-3}
\end{equation}
However, using Theorem \ref{thm:uniformResult} we obtain
\begin{equation}
\left|
	\pi_t(\theta_t^{(i_b)}) - {\sf p}_t^N(\theta_t^{(i_b)})
\right| < \frac{
	V_{t,\varepsilon}
}{
	N^{\frac{1-\varepsilon}{2(d+1)}}
},
\label{eqA6-4}
\end{equation}
where $\varepsilon\in(0,1)$ is an arbitrarily small constant and $V_{t,\varepsilon}$ is an a.s. finite random variable, both independent of $N$. Combining \eqref{eqA6-3} and \eqref{eqA6-4} yields
\begin{equation}
{\sf p}_t^N(\theta_t^{(i_b)}) > \pi_t(\theta_t^{(i_b)}) - \frac{
	V_{t,\varepsilon}
}{
	N^{\frac{1-\varepsilon}{2(d+1)}}
} > \pi_t(\theta_t^\star) - L_t n^{-1} - \frac{
	V_{t,\varepsilon}
}{
	N^{\frac{1-\varepsilon}{2(d+1)}}
}
\nonumber 
\end{equation}
and, as a consequence,
\begin{equation}
{\sf p}_t^N(\theta_t^{\star,N}) \ge {\sf p}_t^N(\theta_t^{(i_b)}) > \pi_t(\theta_t^\star) - L_t n^{-1} - \frac{
	V_{t,\varepsilon}
}{
	N^{\frac{1-\varepsilon}{2(d+1)}}
}.
\label{eqA6-5}
\end{equation}

Moreover, using Theorem \ref{thm:uniformResult} again, we find that
\begin{equation}
\left|
	\pi_t(\hat \theta_t^{\star,N}) - {\sf p}_t^N(\hat \theta_t^{\star,N})
\right| < \frac{
	V_{t,\varepsilon}
}{
	N^{\frac{1-\varepsilon}{2(d+1)}}
},
\label{eqA6-6}
\end{equation}
with the same constant $\varepsilon\in(0,1)$ and a.s. finite random variable $V_{t,\varepsilon}$ as in \eqref{eqA6-5}. Since $\pi_t(\hat \theta_t^{\star,N}) \le \pi_t(\theta_t^\star)$, the inequality \eqref{eqA6-6} implies that
\begin{equation}
{\sf p}_t^N(\hat \theta_t^{\star,N}) < \pi_t(\theta_t^\star) + \frac{
	V_{t,\varepsilon}
}{
	N^{\frac{1-\varepsilon}{2(d+1)}}
}
\nonumber 
\end{equation}
and, since ${\sf p}_t^N(\theta_t^{\star,N}) \le {\sf p}_t^N(\hat \theta_t^{\star,N})$, we arrive at
\begin{equation}
{\sf p}_t^N(\theta_t^{\star,N}) < \pi_t(\theta_t^\star) + \frac{
	V_{t,\varepsilon}
}{
	N^{\frac{1-\varepsilon}{2(d+1)}}
}.
\label{eqA6-7}
\end{equation}

Taking the inequalities \eqref{eqA6-5} and \eqref{eqA6-7} together, we readily obtain the uniform bound (for $\theta \in \Theta$)
\begin{equation}
\left|
	\pi_t(\theta_t^\star) - {\sf p}_t^N(\theta_t^{\star,N})
\right| < \frac{
	V_{t,\varepsilon}
}{
	N^{\frac{1-\varepsilon}{2(d+1)}}
} + L_t n^{-1}
\label{eqA6-8}
\end{equation}
and a simple triangle inequality then yields
\begin{eqnarray}
\left|
	\pi_t(\theta_t^{\star,N}) - \pi_t(\theta_t^\star) 
\right| &\le& \left|
	\pi_t(\theta_t^{\star,N}) - {\sf p}_t^N(\theta_t^{\star,N})
\right| + \left|
	{\sf p}_t^N(\theta_t^{\star,N}) - \pi_t(\theta_t^\star)
\right| \nonumber \\
&<& \frac{
	2V_{t,\varepsilon}
}{
	N^{\frac{1-\varepsilon}{2(d+1)}}
} + L_t n^{-1},
\label{eqA6-9}
\end{eqnarray}
where the second inequality follows from \eqref{eqA6-8} and yet another application of Theorem \ref{thm:uniformResult}. 

The inequality \eqref{eqA6-9} holds for any pair of integers $(N,n)$ that satisfies the relationship \eqref{eqA6-2}. For any given $N$, sufficiently large for 
$$
n_N := \sup\left\{ m \in \mathbb{N}: m^{-1} > \left( \frac{ U_{t,\delta} }{ c_{t,d} } \right)^{\frac{1}{d}} \frac{1}{N^{\frac{1 - 2\delta}{2d}}} \right\}
$$ 
to be well defined, the pair consisting of $N$ and $n=n_N$ satisfies  \eqref{eqA6-2}, while
\begin{equation}
n_N^{-1} \le 2 \left( \frac{ U_{t,\delta} }{ c_{t,d} } \right)^{\frac{1}{d}} \frac{1}{N^{\frac{1 - 2\delta}{2d}}}.
\label{eqA6-10}
\end{equation}
Hence, if we substitute $n=n_N$ in the inequality \eqref{eqA6-9} and then apply the inequality \eqref{eqA6-10} we arrive at
\begin{equation}
\left|
	\pi_t(\theta_t^{\star,N}) - \pi_t(\theta_t^\star) 
\right| < \frac{
	2V_{t,\varepsilon}
}{
	N^{\frac{1-\varepsilon}{2(d+1)}}
}  + 2 \left( 
	\frac{ 
		U_{t,\delta} 
	}{ 
		c_{t,d} 
	} 
\right)^{\frac{1}{d}} \frac{
	L_t
}{
	N^{\frac{1 - 2\delta}{2d}}
},
\label{eqA6-11}
\end{equation}
where $V_{t,\varepsilon}$ and $U_{t,\delta}$ are a.s. finite, and $L_t$ and $c_{t,d}$ are finite. The constants $\varepsilon \in (0,1)$ and $\delta\in (0,1/2)$ can be chosen arbitrarily small. Hence, if we let $0<\delta = \varepsilon / 2<\frac{1}{2}$, the r.h.s. of \eqref{eqA6-11} can be upper bounded, which results in the bound
\begin{equation}
 \left|
	\pi_t(\theta_t^{\star,N}) - \pi_t(\theta_t^\star) 
\right| < \frac{
	W_{t,d,\varepsilon}
}{
	N^{\frac{1-\varepsilon}{2(d+1)}}
},
\nonumber
\end{equation}
where
$$
W_{t,d,\varepsilon} = 2\left[
	V_{t,\varepsilon} + \left( 
		\frac{ 
			U_{t,\delta(\varepsilon)} 
		}{ 
			c_{t,d} 
		} 
	\right)^{\frac{1}{d}} L_t
\right] < \infty \quad \mbox{a.s.}
$$

%
\subsection{Proof of Corollary~\ref{cor:ConvergenceRateCostFunc}}\label{proof:ConvergenceRateCostFunc}

Recall that
\begin{align*}
\|f\|_\infty = \sup_{\theta\in\Theta} |f(\theta)| < \infty.
\end{align*}
Note that Theorem~\ref{thm:FinalConvergenceRate} implies that
\begin{align}
0 \le e^{-f(\theta^\star)} - e^{- f(\theta_T^{\star,N})} \leq \frac{W_{T,d,\varepsilon} Z_{\pi_T}}{N^\frac{1}{2(d+1)}},\label{proof:ConvergenceRateCostFuncAux1}
\end{align}
where $Z_{\pi_T}$ is the normalizing constant of $\pi_T$. Next, we lower bound the left-hand side of \eqref{proof:ConvergenceRateCostFuncAux1} as
\begin{align}
e^{-f(\theta^\star)} - e^{- f(\theta_T^{\star,N})} &= e^{- f(\theta_T^{\star,N})} \left(e^{f(\theta_T^{\star,N}) - f(\theta^\star)} - 1\right) \nonumber\\
&\geq e^{-\|f\|_\infty} (f(\theta_T^{\star,N}) - f(\theta^\star))\label{proof:ConvergenceRateCostFuncAux2}
\end{align}
where the last inequality follows from the relationships
$$
e^{-f(\theta_T^{\star,N})} \geq e^{-\|f\|_\infty}
$$
(since $f(\theta_T^{\star,N}) \leq \|f\|_\infty$) and $e^a \geq a + 1$ for $a \in \bR$. Combining \eqref{proof:ConvergenceRateCostFuncAux1} and \eqref{proof:ConvergenceRateCostFuncAux2}, we obtain
\begin{align*}
f(\theta_T^{\star,N}) - f(\theta^\star) \leq \frac{\tilde{W}_{T,d,\varepsilon}}{{N^\frac{1}{2(d+1)}}}
\end{align*}
where
\begin{align*}
\tilde{W}_{T,d,\varepsilon} = Z_{\pi_T} W_{T,d,\varepsilon} e^{\|f\|_\infty}
\end{align*}
is a.s. finite.

%

\bibliographystyle{spbasic}
\bibliography{draft}



\end{document}